\documentclass[pdflatex,sn-mathphys-num]{sn-jnl}

\usepackage{graphicx}%
\usepackage{multirow}%
\usepackage{amsmath,amssymb,amsfonts}%
\usepackage{amsthm}%
\usepackage{mathrsfs}%
\usepackage[title]{appendix}%
\usepackage{xcolor}%
\usepackage{textcomp}%
\usepackage{manyfoot}%
\usepackage{booktabs}%
\usepackage{algorithm}%
\usepackage{algorithmicx}%
\usepackage{algpseudocode}%
\usepackage{listings}%
\usepackage{amssymb,latexsym,amsmath}     
\usepackage{epsf}
\usepackage{array}
\usepackage{amssymb,amscd,comment}
\usepackage{float}
\usepackage{comment, verbatim}
\usepackage{geometry}
\usepackage{relsize}
\usepackage{calc}
\usepackage[mathscr]{eucal}
\usepackage{hhline}
\usepackage{tikz}
\usetikzlibrary{positioning}
\usepackage{enumitem,mathrsfs}
\setlist[enumerate]{leftmargin=.5in}
\setlist[itemize]{leftmargin=.5in}
\usepackage{multirow}
\usepackage{tabularx}
\usepackage{soul, color}
\usepackage{amssymb,amstext,amsgen,amsfonts,mathrsfs,verbatim,url,hyperref,mathdots, mathtools}
\usepackage{pgfplots}
\pgfplotsset{compat=1.18}
\usepackage{mathrsfs}
\usepackage{arydshln}

\ifpdf
  \DeclareGraphicsExtensions{.eps,.pdf,.png,.jpg}
\else
  \DeclareGraphicsExtensions{.eps}
\fi

\newlist{inparaenum}{enumerate}{2}
\setlist[inparaenum]{nosep}
\setlist[inparaenum,1]{label=\bfseries\alph*.}

\def\comment#1{}
\newcommand{\C}{\mathbb{C}}
\newcommand{\R}{\mathbb{R}}

\newcommand{\e}{\mathbf{e}}

\usetikzlibrary{arrows}

\def\invddots{\mathinner{\mskip1mu\raise1pt\vbox{\kern7pt\hbox{.}}\mskip2mu
		\raise4pt\hbox{.}\mskip2mu\raise7pt\hbox{.}\mskip1mu}}

\theoremstyle{thmstyleone}%
\newtheorem{theorem}{Theorem}
\newtheorem{proposition}[theorem]{Proposition}%
\newtheorem{lemma}[theorem]{Lemma}%
\newtheorem{corollary}[theorem]{Corollary}%

\theoremstyle{thmstyletwo}%
\newtheorem{remark}{Remark}%

\theoremstyle{thmstylethree}%
\newtheorem{definition}{Definition}%

\begin{document}

\keywords{Quantum Spin Chains, Perfect State Transfer, Early State Exclusion}



\author[1]{\fnm{Mia} \sur{Escobar}}\email{mescob3@uw.edu}
\equalcont{These authors contributed equally to this work.}

\author[2]{\fnm{Valentin} \sur{Garcia}}\email{valentin\_garcia@brown.edu}
\equalcont{These authors contributed equally to this work.}

\author[3]{\fnm{Anastasiia} \sur{Minenkova}}\email{minenkova@hartford.edu}
\equalcont{These authors contributed equally to this work.}

\affil[1]{\orgdiv{Department of Mathematics}, \orgname{University of Washington - Tacoma}, \orgaddress{\street{1900 Commerce Street}, \city{Tacoma}, \postcode{98402}, \state{WA}, \country{USA}}}

\affil[2]{\orgdiv{Department of Mathematics}, \orgname{Brown University}, \orgaddress{\street{ 1 Prospect St}, \city{Providence}, \postcode{02912}, \state{RI}, \country{USA}}}

\affil[3]{\orgdiv{Department of Mathematics}, \orgname{University of Hartford}, \orgaddress{\street{200 Bloomfield Ave.}, \city{West Hartford}, \postcode{06117}, \state{CT}, \country{USA}}}

\title[Spectral Fusion and ESE]{Spectral Fusion for Identifying Early State Exclusion in Symmetric Quantum Spin Chains}

\abstract{
Perfect state transfer (PST) in one-dimensional quantum spin chains provides a natural setting in which quantum information transport can be analyzed using spectral methods. In the single-excitation subspace, the dynamics of a chain with nearest-neighbor interactions are governed by a Jacobi matrix, allowing questions of state transfer to be formulated in terms of eigenvalue distributions and symmetry properties of eigenvectors. In this work, we investigate the phenomenon of \emph{early state exclusion} (ESE), whereby the overlap of the time-evolved state with the initial state vanishes at a time strictly earlier than the first occurrence of perfect state transfer. Building on earlier constructions of Hamiltonians exhibiting PST with and without ESE, we provide explicit Jacobi matrix realizations for arbitrary odd-length chains and establish general conditions under which ESE occurs or does not occur. We propose the process of \emph{spectral fusion} (SF) to build infinite families of such Hamiltonians. These results broaden the known class of spin chains displaying early state exclusion and further clarify the role of spectral structure of the Hamiltonians.}

\maketitle






\section{Introduction}

The problem of quantum state transfer in one-dimensional spin systems has attracted sustained interest due to its relevance in quantum information processing and its rich mathematical structure. In particular, \emph{perfect state transfer} (PST) refers to the phenomenon whereby a quantum state initially localized at one site of a spin chain is reproduced exactly at another site after a finite evolution time. This capability is essential for the coherent transmission of quantum information between spatially separated components of a quantum device.

From a physical perspective, quantum communication can be realized either by directly transmitting quantum states or by distributing entanglement and performing teleportation protocols. While optical fibers are the natural medium for long-distance quantum communication, short-range communication between nearby quantum processors motivates the study of alternative physical channels, including engineered spin chains and ion-trap systems. These systems admit effective Hamiltonian descriptions that lend themselves naturally to analytical and spectral investigation.

The study of state transfer in spin chains was initiated by Bose~\cite{B03}, who considered a one-dimensional chain of $N$ qubits with open boundary conditions governed by a time-independent Hamiltonian. A key structural insight was later provided by Kay~\cite{Kay10}, who showed that the Hamiltonian describing a chain of $N+1$ qubits with nearest-neighbor interactions can be represented, in the single-excitation subspace, by a real symmetric tridiagonal matrix, i.e., a Jacobi matrix. This observation established a direct link between quantum dynamics and the spectral theory of Jacobi operators.

Subsequent developments (see, for example,~\cite{Kay10,VZh12,Kall} and references therein) revealed deep connections between perfect state transfer, graph symmetries, orthogonal polynomials, and inverse spectral problems. In this framework, questions of PST and its variants, such as \emph{pretty good state transfer}, may be formulated in terms of  conditions on eigenvalues and symmetry constraints on eigenvectors. As a result, the study of quantum state transfer has become a fertile meeting ground for quantum information theory, graph theory, and numerical linear algebra.

In the present paper, we investigate the phenomenon of \emph{early state exclusion} (ESE). In~\cite{ESE}, the authors constructed families of one-dimensional Hamiltonians that not only realize perfect state transfer between the end vertices of a weighted path, but also exhibit the property that the overlap of the time-evolved state with the initial state vanishes at some time strictly before the first PST time. When this occurs, the system is said to exhibit early state exclusion. It was shown that infinite families of Hamiltonians admit PST both with and without ESE, depending on spectral properties and the parity of the chain length. 

Our goal is to extend this analysis to the case of an odd number of qubits and to provide explicit constructions of Hamiltonians for which early state exclusion occurs. We focus on chains with nearest-neighbor interactions (see Figure~\ref{fig:nearestneighbor}), whose dynamics in the single-excitation subspace are governed by a Jacobi matrix
\begin{equation}\label{eq:J}
    J=\begin{bmatrix}
		a_0 & b_0    &  &  \\
		b_0 & a_1    & \ddots &  \\
		& \ddots & \ddots & b_{N-1} \\
		&        & b_{N-1}& a_{N}     \\
	\end{bmatrix},
\end{equation}	
where $a_k \in \R$ and $b_k > 0$. Denote by $\sigma(J)$ the spectrum of $J$. The time evolution of the system is then given by $e^{-iJt}$. Thus, the spectral properties of $J$ determine the dynamical behavior of the system and play a central role in the analysis of both perfect state transfer and early state exclusion.

\begin{figure}[ht]
    \begin{center}
    \begin{tikzpicture}[roundnode/.style={circle, draw=gray! 150!, fill=gray!10, very thick, minimum size=12mm},scale=1.3]
        \node[roundnode] (1) {$0$};
        \node[roundnode] (2) [right=of 1]{$1$};
        \node[roundnode] (3) [right=of 2]{$2$};
        \node[roundnode] (N) [right=of 3]{\small$N-1$};
        \node[roundnode] (N+1) [right=of N]{\small$N$};
         \path (1) edge [-stealth, gray! 150, very thick,out=30,in=150] node[above] {$b_0$} (2);
          \path (2) edge [-stealth, gray! 150, very thick,out=-150,in=-30] (1);
              \path (2) edge [-stealth, gray! 150, very thick,out=30,in=150] node[above] {$b_1$} (3);
          \path (3) edge [-stealth, gray! 150, very thick,out=-150,in=-30] (2);
             \path (3) edge [-stealth, dotted,gray! 150, very thick,out=30,in=150] (N);
          \path (N) edge [-stealth, dotted, gray! 150, very thick,out=-150,in=-30] (3);
                   \path (N) edge [-stealth, gray! 150, very thick,out=30,in=150] node[above] {$b_{N-1}$} (N+1);
          \path (N+1) edge [-stealth, gray! 150, very thick,out=-150,in=-30] (N);
        \path (1) edge [-stealth, gray! 150, very thick,out=80,in=100,looseness=7] node[above] {$a_0$} (1);
            \path (2) edge [-stealth, gray! 150, very thick,out=80,in=100,looseness=7] node[above] {$a_1$} (2);
                \path (3) edge [-stealth, gray! 150, very thick,out=80,in=100,looseness=7] node[above] {$a_2$} (3);
                    \path (N) edge [-stealth, gray! 150, very thick,out=80,in=100,looseness=7] node[above] {$a_{N-1}$} (N);
                        \path (N+1) edge [-stealth, gray! 150, very thick,out=80,in=100,looseness=7] node[above] {$a_N$} (N+1);
        \coordinate (A) at (1,1);
    \end{tikzpicture}
    \end{center}
    \caption{A quantum spin-chain model consisting of $N + 1$ qubits with nearest-neighbor coupling and environmental interactions corresponding to Hamiltonian expressed by $J$.}
    \label{fig:nearestneighbor}
\end{figure}

In this paper, we describe the spectral properties of $J$ that result in either ESE or no ESE, and introduce a method to generate infinite families of both through \textit{spectral fusion} (SF). Section~\ref{Section 2} covers the preliminary information necessary to set up of the problem, along with previously proven examples of odd and even-ordered spin chains. Section~\ref{Section 3} then establishes a formula for the amplitude of odd and even-ordered systems, alongside the key properties of these amplitudes. Finally, Section~\ref{Section 4} introduces SF as the inverse operation of \textit{spectral surgery}, allowing us to predict how a system will change when eigenvalues are added to a chain, and, importantly, how to conserve properties such as ESE, thereby proving the existence of infinite families of odd-ordered quantum spin chains with and without ESE.

\section{Preliminaries}
\label{Section 2}
Using the canonical basis $\{\e_k\}_{k=0}^N$ of $\C^{N+1}$, where $\e_k$ represents the excited state of the $(k+1)$th qubit and, in particular, $\e_0 = [1,0,\dots,0]^\top$, we study the time evolution of the initial state $\e_0$ under the unitary time evolution operator $e^{-iJt}$. Under suitable conditions on $J$, there exists a time $T>0$ such that the probability of finding the excitation at the last qubit is equal to one. This observation motivates the following definition.
\begin{definition}
\emph{A Jacobi matrix $J$ realizes \textbf{perfect state transfer} (PST) between the end-vertices of the weighted path in Figure~\ref{fig:nearestneighbor} at time $T > 0$ if
\begin{equation*}
\label{eq:pst}
e^{-iJT}\e_0 = e^{i\phi}\e_N \quad \text{or} \quad |\langle e^{-iJt}{\e}_0,{\e_N}\rangle_{\mathbb{C}^{N+1}}| = 1
\end{equation*}
for some phase $\phi \in \R$.}
\end{definition}

\begin{remark}
{From here on, when we state that a matrix $J$ realizes PST, this refers exclusively to PST between the terminal vertices of its associated weighted path graph. }
\end{remark}

Kay~\cite{Kay10} established that a necessary and sufficient condition for $J$ to realize PST is that $J$ must be persymmetric and its ordered eigenvalues $\mu_0 < \mu_1 < \cdots < \mu_N$ satisfy the relation
\begin{equation}
\label{eq:spectrumPST}
\mu_{k + 1} - \mu_k = \frac{(2n_k + 1)\pi}{T} \quad \text{for every $0 \leq k \leq N - 1$,}
\end{equation}
where each $n_k$ is a nonnegative integer.

In~\cite{ESE}, the authors introduced the following phenomenon.
\begin{definition}
Let $J$ be a Jacobi matrix that has the earliest PST between the end-vertices of the weighted path occurring at time $T_0 > 0$. If there is a time $\tau$ such that $0 < \tau < T_0$ and
\[
\langle e^{-iJ\tau}{\bf e}_0,{\bf e_0}\rangle_{\mathbb{C}^{N+1}} = 0,
\]
then $J$ has an \textit{Early State Exclusion} (ESE) at time $\tau$.
\end{definition}

\begin{figure}[H]
    \centering
    \includegraphics[width=0.5\linewidth]{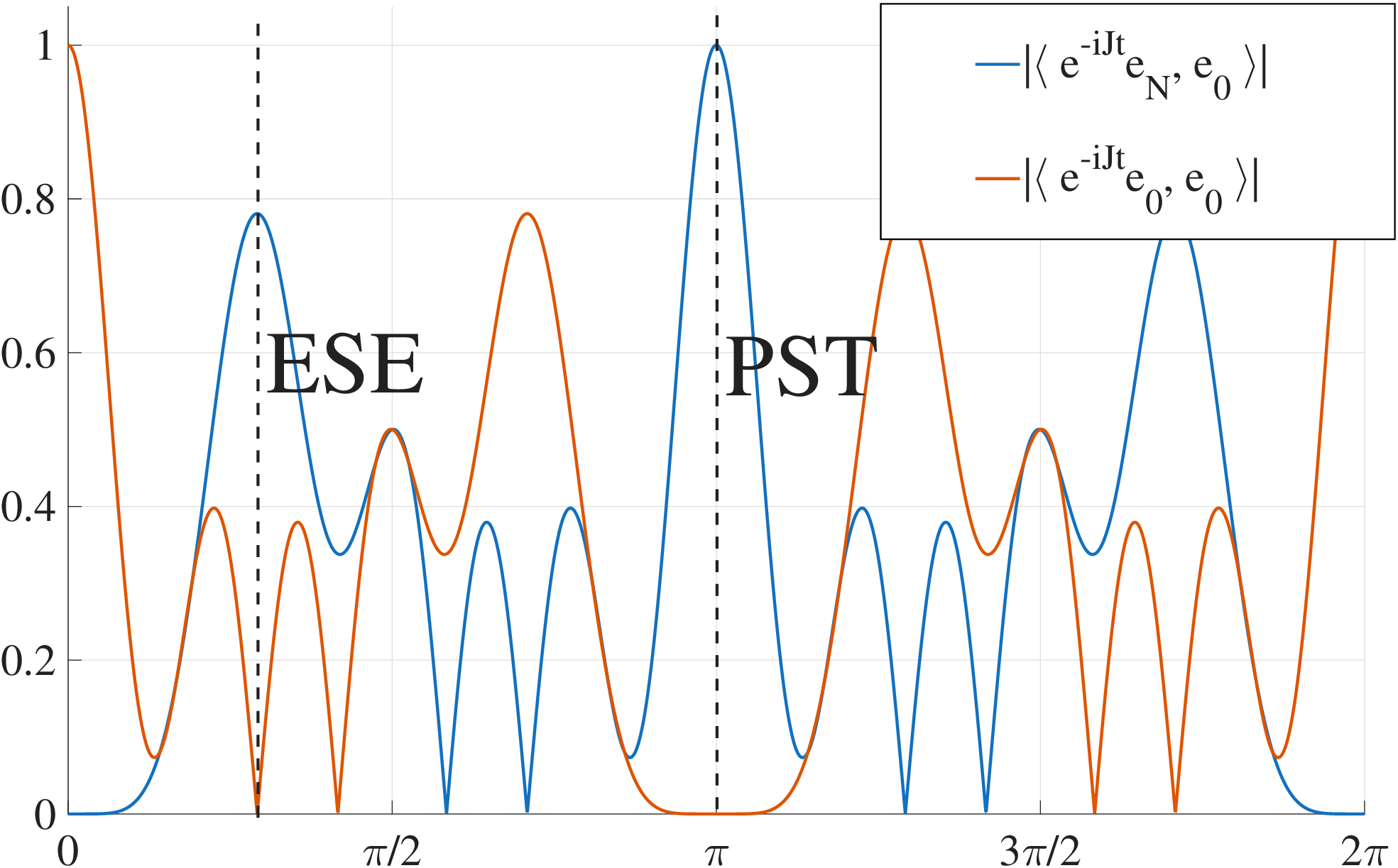}
    \caption{Examples of ESE occurrences for $J$ with the reduced spectrum $\{0,\pm3,\pm8,\pm9\}$.}
    \label{fig:389}
\end{figure}

The ESE phenomenon is not necessarily unique and can happen multiple times (see Figure~\ref{fig:389}).
In~\cite{ESE}, the authors discussed that ESE would allow one to safely re‑inject quantum information into the system before the original transfer completes, potentially increasing transmission efficiency.
The authors first analyzed small systems to establish limitations. They proved that ESE is impossible for systems with either $2$ or $3$ qubits, showing that any time at which the initial state vanished necessarily coincided with PST time  itself. These results rely on spectral constraints imposed by PST, such as eigenvalue spacing and persymmetry of the Jacobi matrix. The analysis clarifies that not every Hamiltonian satisfying the PST conditions can exhibit ESE, and that system size and structure play a crucial role.
The main contribution of the paper is the construction of Jacobi matrices that exhibit both PST and ESE, starting with an explicit $4\times4$ example and then extending to an infinite family for all even system sizes. Using properties of Krawtchouk and Chebyshev polynomials, the authors systematically build Hamiltonians where the overlap with the initial state vanishes at an intermediate time, yet PST still occurs later.

In related works~\cite{EM25,EscobarGarcia2025}, the authors constructed Hamiltonians for chains of five and seven qubits with symmetric spectra and investigated the conditions under which ESE  occurs. They constructed examples when the smallest positive eigenvalue divides all other eigenvalues, in which case ESE does not occur, and examples when ESE appears and this divisibility condition fails. In the case of 5 qubit chains, this divisibility condition was a criterion for the Hamiltonian to exhibit no ESE. Based on these examples, it was conjectured that this spectral property characterizes the presence or absence of ESE in systems with symmetric spectra. In this work, we demonstrate that this is not the case in general for larger chains. In particular, we provide explicit counterexamples showing that the divisibility of eigenvalues by the smallest positive eigenvalue is not sufficient for the ``no ESE'' Hamiltonian. This indicates that ESE is governed by more subtle spectral and structural features of the Hamiltonian than previously suggested. 

\section{Jacobi Matrices and the Amplitudes}
\label{Section 3}
Let $J$ be a persymmetric Jacobi matrix of order $N + 1$ with spectrum $\sigma(J)$. We say that $\sigma(J)$ is \emph{symmetric }
if $-\lambda \in \sigma(J)$ whenever $\lambda  \in \sigma(J)$. Since the eigenvalues of $J$ are distinct~\cite{BG}, $0 \in \sigma(J)$ whenever $N + 1$ is odd. Hence, when studying persymmetric Jacobi matrices with symmetric spectra, one must account for the parity of $N + 1$. We therefore adopt the following convention:
\begin{itemize}
    \item If $N + 1 = 2n + 1$ is odd, we write $\sigma(J) = \{\lambda_k\}_{k = -n}^n$ with $\lambda_{-k} = -\lambda_k$.
    \item If $N + 1 = 2n$ is even, we write $\sigma(J) = \{\pm \lambda_k\}_{k = 1}^n$.
\end{itemize}
Furthermore, we assume that $0 < \lambda_1 < \cdots < \lambda_n$ in both cases.

\begin{remark}
If $J$ experiences PST and $\sigma(J)$ is \emph{centered} about $\mu \in \R$ (i.e., $\mu + \lambda  \in \sigma(J)$ if and only if $\mu - \lambda \in \sigma(J)$), the resulting amplitude formulas in Proposition~\ref{Prop: amplitudes} differ only by a factor of $e^{-i\mu t}$. Because a translation of the spectrum preserves both PST and the number of instances in which ESE occurs (see~\cite{EM25}), it suffices to study the case where $\mu = 0$.
\end{remark}

We introduce a family of polynomials related to $J$ via a three-term recurrence relation. Specifically, we define the sequence $\{p_k\}_{k = 1}^N$ by: 
	\begin{equation}\label{threeP}
		p_{k+1}(x) = p_{k}(x)(x-a_{k}) - b^2_{k-1}p_{k-1}(x) \quad \text{for every $1 \leq k \leq N$,}
	\end{equation}
where each $a_k$ and $b_k$ are given as in \eqref{eq:J}, $p_0 \equiv 1$, $p_{-1} \equiv 0$, and $b_{-1} \equiv 1$. Note that 
    \[p_k(x)=\det(xI-J_k) \quad \text{for every $1 \leq k \leq N+1$, }\]
where $J_k$ is the $k\times k$ truncation of $J$ obtained by deleting its last $N+1-k$ rows and columns (see~\cite{DMS}).

\begin{lemma}
\label{Lemma b's}
Let $J$ be a persymmetric Jacobi matrix of order $N + 1$ with a symmetric spectrum. 
Then
\[
\frac{1}{b_0 \cdots b_{N - 1}}
=
\begin{cases}
    \displaystyle \sum_{j = 0}^{n} \prod_{\substack{i = 0 \\ i \neq j}}^n \frac{1}{|\lambda_j^2 - \lambda_i^2|} &\text{if $N + 1 = 2n + 1$ is odd} \\
    \displaystyle \sum_{j = 1}^{n} \frac{1}{\lambda_j} \prod_{\substack{i = 1 \\ i \neq j}}^n \frac{1}{|\lambda_j^2 - \lambda_i^2|} &\text{if $N + 1 = 2n$ is even,}
\end{cases}
\]
where each $b_k$ is given as in \eqref{eq:J}.
\end{lemma}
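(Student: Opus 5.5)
The plan is to read the constant $b_0\cdots b_{N-1}$ off the $(0,N)$ entry of the resolvent of $J$. We then use persymmetry to turn the spectral representation of that entry into a statement about the spectral weights at $\e_0$. Finally, normalization of those weights gives a sum over the eigenvalues of $1/|p_{N+1}'|$, which we evaluate explicitly using the symmetry of the spectrum.

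\textbf{Step 1 (Cramer's rule).} For $z\notin\sigma(J)$, compute $\big((zI-J)^{-1}\big)_{0N}$ as a cofactor divided by $\det(zI-J)=p_{N+1}(z)$. The relevant minor is obtained by deleting row $N$ and column $0$ of the tridiagonal matrix $zI-J$. That submatrix is triangular with diagonal entries $-b_0,\dots,-b_{N-1}$. Its determinant is therefore $\pm b_0\cdots b_{N-1}$, and the sign cancels against the cofactor sign, giving
\[
\big((zI-J)^{-1}\big)_{0N}=\frac{b_0\cdots b_{N-1}}{p_{N+1}(z)} .
\]

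\textbf{Step 2 (spectral side).} Let $\{u^{(k)}\}$ be orthonormal eigenvectors of $J$ for the eigenvalues $\mu_k$. Then $\big((zI-J)^{-1}\big)_{0N}=\sum_k u^{(k)}_0u^{(k)}_N/(z-\mu_k)$. Comparing residues at $\mu_k$ gives
\[
u^{(k)}_0u^{(k)}_N=\frac{b_0\cdots b_{N-1}}{p_{N+1}'(\mu_k)}.
\]
Persymmetry means $J$ commutes with the flip matrix. Since the eigenvalues are simple, each $u^{(k)}$ is symmetric or antisymmetric, so $u^{(k)}_N=\pm u^{(k)}_0$. Taking absolute values yields $(u^{(k)}_0)^2=b_0\cdots b_{N-1}/|p_{N+1}'(\mu_k)|$, without ever tracking signs. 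Since $\sum_k (u^{(k)}_0)^2=\|\e_0\|^2=1$, we obtain
\[
\frac{1}{b_0\cdots b_{N-1}}=\sum_{\mu\in\sigma(J)}\frac{1}{|p_{N+1}'(\mu)|}.
\]

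\textbf{Step 3 (evaluating the derivatives).} In the odd case $p_{N+1}(x)=x\prod_{i=1}^n(x^2-\lambda_i^2)$.
\begin{itemize}
\item At $x=0$ we get $|p_{N+1}'(0)|=\prod_{i=1}^n\lambda_i^2$. With $\lambda_0=0$, this is exactly $\prod_{i\ne 0}|\lambda_0^2-\lambda_i^2|$.
\item At $x=\pm\lambda_j$ with $j\ge1$ we get $|p_{N+1}'(\pm\lambda_j)|=\lambda_j\cdot 2\lambda_j\prod_{i\ge1,i\ne j}|\lambda_j^2-\lambda_i^2|=2\prod_{i=0,i\ne j}^n|\lambda_j^2-\lambda_i^2|$.
\end{itemize}
The two eigenvalues $\pm\lambda_j$ each contribute half of the $j$-th term, which gives the stated formula.

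In the even case $p_{N+1}(x)=\prod_{i=1}^n(x^2-\lambda_i^2)$. Here $|p_{N+1}'(\pm\lambda_j)|=2\lambda_j\prod_{i\ne j}|\lambda_j^2-\lambda_i^2|$. Summing the two contributions from $\pm\lambda_j$ gives $\frac{1}{\lambda_j}\prod_{i\ne j}|\lambda_j^2-\lambda_i^2|^{-1}$, as claimed.

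The only delicate point is Step 2, namely justifying that $|u_0^{(k)}u_N^{(k)}|=(u_0^{(k)})^2$. This is exactly where persymmetry, together with the simplicity of the spectrum from \cite{BG}, is needed. The remaining steps are routine.
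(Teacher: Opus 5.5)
Your proof is correct. Step~3 is the same computation the paper does: evaluate $|p_{N+1}'(\pm\lambda_j)|$ and combine the two contributions of $\pm\lambda_j$. The difference is how you obtain the identity $\frac{1}{b_0\cdots b_{N-1}}=\sum_{\mu\in\sigma(J)}\frac{1}{|p_{N+1}'(\mu)|}$. The paper quotes it as Theorem~3 of \cite{H74}. You derive it from three ingredients:
\begin{itemize}
\item the $(0,N)$ resolvent entry, computed by Cramer's rule on the lower-triangular minor;
\item a comparison of residues at each eigenvalue;
\item the observation that persymmetry together with simple spectrum forces $u_N^{(k)}=\pm u_0^{(k)}$.
\end{itemize}
The citation keeps the paper's proof short. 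Your argument makes the lemma self-contained and shows exactly where persymmetry enters. Without persymmetry you would still get the signed identity $u_0^{(k)}u_N^{(k)}=b_0\cdots b_{N-1}/p_{N+1}'(\mu_k)$, but you could not convert it into a statement about $(u_0^{(k)})^2$ and normalize. As a bonus, your Step~2 gives the weights $(u_0^{(k)})^2=b_0\cdots b_{N-1}/|p_{N+1}'(\mu_k)|$, i.e.\ $\|\mathbf{u}_k\|^2=|p_{N+1}'(\lambda_k)|/(b_0\cdots b_{N-1})$. The paper again imports this fact from \cite{H74} in the proof of Proposition~\ref{Prop: amplitudes}, so your argument covers that step as well.
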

\begin{proof}
Assume that $N + 1 = 2n + 1$ is odd. Then the characteristic polynomial of $J$ is given by
\begin{equation}
\label{eq:p_{N + 1} odd}
    p_{N + 1}(x) = x 
    \prod_{i = 1}^n \big(x - 
    \lambda_i\big)\big(x +
    \lambda_i\big)  = x 
    \prod_{i = 1}^{n}(x^2 - 
    \lambda_i^2).
\end{equation}
Differentiating the polynomial above yields
\[
    p_{N + 1}'(x) =  \prod_{i = 1}^{n}(x^2 - 
    \lambda_i^2) + 2x
    ^2\sum_{j = 1}^{n} \prod_{\substack{i = 1 \\ i \neq j}}^n (x^2 - 
    \lambda_i^2).
\]
Evaluating $p_{N + 1}'$ at each $\lambda_j$ gives
\begin{equation}
\label{eq:p_{N + 1}' odd}
|p_{N + 1}'(
\lambda_j)| = 
    \begin{cases}
            \displaystyle \prod_{i = 1}^n \lambda_i^2 &\text{if $j = 0$} \\
            \displaystyle 2\lambda_j^2 \prod_{\substack{i = 1 \\ i \neq j}}^n |\lambda_j^2 - \lambda_i^2| &\text{if $j = 1, \dots, n$.}
    \end{cases}
\end{equation}
Hence, by Theorem 3 in~\cite{H74},
    \[
    \frac{1}{b_0 \cdots b_{N - 1}} = \sum_{j = -n}^n \frac{1}{|p_{N + 1}'(
    \lambda_j)|}=\sum_{j = 0}^{n} \prod_{\substack{i = 0 \\ i \neq j}}^n \frac{1}{|\lambda_j^2 - \lambda_i^2|}.
    \]
\par Now assume that $N + 1 = 2n$ is even. Noting that the characteristic polynomial of $J$ is
\begin{equation}
\label{eq:p_{N + 1} even}
    p_{N + 1}(x) = \prod_{i = 1}^n (x^2 
    - \lambda_i^2)
\end{equation}
allows us to calculate
\begin{equation*}
\label{eq:p_{N + 1}' even}
    |p_{N + 1}'(
    \lambda_j)| = 2\lambda_j \prod_{\substack{i = 1 \\ i \neq j}}^n |\lambda_j^2 - \lambda_i^2|.
\end{equation*}
Again, by Theorem 3 in~\cite{H74},
\[
    \frac{1}{b_0 \cdots b_{N - 1}} = 2\sum_{j = 1}^n \frac{1}{|p_{N + 1}'(
    \lambda_j)|} = \sum_{j = 1}^{n} \frac{1}{\lambda_j} \prod_{\substack{i = 1 \\ i \neq j}}^n \frac{1}{|\lambda_j^2 - \lambda_i^2|}.
\]
\end{proof}

Our next result provides a concrete description of the ESE amplitude function solely through the eigenvalues of $J$ for a symmetric spectrum.

\begin{proposition}[Amplitude Formulas]
\label{Prop: amplitudes}
Suppose that $J$ is a persymmetric Jacobi matrix of order $N + 1$ with a symmetric spectrum. 
Then 
\begin{equation*}
\label{eq:Amplitude Formulas}
\langle e^{-iJt} \e_0, \e_0 \rangle_{\mathbb{C}^{N+1}} =
\begin{cases}
    \displaystyle 
    \sum_{k = 0}^{n} c_k\cos{\lambda_k t} &\text{if $N + 1 = 2n + 1$ is odd} \\
    \displaystyle 
    \sum_{k = 1}^{n} \widetilde{c}_k\cos{\lambda_k t} &\text{if $N + 1 = 2n$ is even,}
\end{cases}
\end{equation*}
where
\begin{equation}
\label{eq:c_k}
c_k = \prod_{\substack{i = 0 \\ i \neq k}}^n \frac{1}{|\lambda_k^2 - \lambda_i^2|} \Bigg/\sum_{j = 0}^{n} \prod_{\substack{i = 0 \\ i \neq j}}^n \frac{1}{|\lambda_j^2 - \lambda_i^2|} \quad \text{for every $0 \leq k \leq n$}
\end{equation}
and
\begin{equation}
\label{eq:widetilde{c}_k}
\widetilde{c}_k = \frac{1}{\lambda_k} \prod_{\substack{i = 1 \\ i \neq k}}^n \frac{1}{|\lambda_k^2 - \lambda_i^2|} \Bigg/\sum_{j = 1}^{n} \frac{1}{\lambda_j} \prod_{\substack{i = 1 \\ i \neq j}}^n \frac{1}{|\lambda_j^2 - \lambda_i^2|} \quad \text{for every $1 \leq k \leq n$.}
\end{equation}
\end{proposition}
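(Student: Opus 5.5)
We expand $\e_0$ in an orthonormal eigenbasis of $J$. Let $\{v_j\}$ be orthonormal eigenvectors with $Jv_j=\lambda_j v_j$, where $j$ runs over $-n,\dots,n$ in the odd case and over $\pm1,\dots,\pm n$ in the even case. Then
\[
\langle e^{-iJt}\e_0,\e_0\rangle=\sum_j |\langle v_j,\e_0\rangle|^2 e^{-i\lambda_j t}.
\]
The weights $w_j=|\langle v_j,\e_0\rangle|^2$ form the spectral measure of $J$ at $\e_0$. The plan is to compute these weights explicitly and then pair $\lambda_j$ with $-\lambda_j$ to turn the exponentials into cosines.

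To compute the weights we use the standard Jacobi matrix identity
\[
w_j=\frac{p_N(\lambda_j)}{p_{N+1}'(\lambda_j)}.
\]
This follows from the resolvent formula $\langle (x-J)^{-1}\e_0,\e_0\rangle=\widetilde p_N(x)/p_{N+1}(x)$, where $\widetilde p_N$ is the characteristic polynomial of $J$ with the first row and column deleted, together with a partial-fraction expansion. By persymmetry, deleting the first row and column gives the same determinant as deleting the last, so $\widetilde p_N=p_N$.

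Next, Hochstadt's argument (Theorem 3 in~\cite{H74}, already used in Lemma~\ref{Lemma b's}) gives, for a persymmetric $J$,
\[
p_N(\lambda_j)=(-1)^{N-j'}\,b_0\cdots b_{N-1},
\]
where $j'$ is the index of $\lambda_j$ in increasing order. The sign of $p_N(\lambda_j)$ alternates, and so does the sign of $p_{N+1}'(\lambda_j)$, so the quotient is positive. This yields
\[
w_j=\frac{b_0\cdots b_{N-1}}{|p_{N+1}'(\lambda_j)|}.
\]
This is exactly the identity that, after summing $w_j$ to $1$, produces Lemma~\ref{Lemma b's}. We therefore take it from that source, or rederive it from $p_N(\lambda_j)^2=(b_0\cdots b_{N-1})^2$. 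That squared identity holds because, by persymmetry, the first and last components of each eigenvector agree up to sign.

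Substituting the values of $|p_{N+1}'(\lambda_j)|$ from the proof of Lemma~\ref{Lemma b's}, and the value of $1/(b_0\cdots b_{N-1})$ from the lemma itself, gives the weights in normalized form:
\[
w_j=\frac{1/|p_{N+1}'(\lambda_j)|}{\sum_i 1/|p_{N+1}'(\lambda_i)|}.
\]
Since $|p_{N+1}'|$ takes the same value at $\lambda_j$ and $-\lambda_j$, we have $w_j=w_{-j}$. Pairing the terms then gives $w_j(e^{-i\lambda_j t}+e^{i\lambda_j t})=2w_j\cos\lambda_j t$.

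In the odd case the $j=0$ term is simply $w_0$, since $\cos 0=1$. Using $|p'(\lambda_j)|=2\lambda_j^2\prod|\lambda_j^2-\lambda_i^2|$, the factor $2$ from pairing cancels the $2$ in the denominator. The remaining $\lambda_j^2$ is the $i=0$ factor $|\lambda_j^2-\lambda_0^2|$, because $\lambda_0=0$. The coefficient therefore becomes exactly $c_k$. In the even case the same cancellation of the factor $2$ leaves $\frac1{\lambda_k}\prod_{i\ne k}|\lambda_k^2-\lambda_i^2|^{-1}$, normalized by the same sum as in Lemma~\ref{Lemma b's}, which is $\widetilde c_k$.

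The main obstacle is justifying the weight formula $w_j=b_0\cdots b_{N-1}/|p'_{N+1}(\lambda_j)|$ from persymmetry. I would first try to cite it from~\cite{H74} or~\cite{DMS}. Failing that, I would argue directly from the eigenvectors: $v_j$ has components proportional to $p_k(\lambda_j)/(b_0\cdots b_{k-1})$, persymmetry forces the last component to equal $\pm$ the first, and the normalization constant is computed via the Christoffel–Darboux formula. The remaining steps are bookkeeping with the factor $2$ and the $\lambda_0=0$ term.
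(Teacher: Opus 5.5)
Your proposal is correct and follows essentially the paper's route. Your weights $w_j=|\langle v_j,\e_0\rangle|^2$ are exactly the paper's $1/\|\mathbf{u}_k\|^2=b_0\cdots b_{N-1}/|p_{N+1}'(\lambda_k)|$, which the paper obtains by expanding $\e_0$ in the unnormalized eigenvectors $\mathbf{u}_k$ (first entry $1$) and citing Theorem 3 of~\cite{H74}; the pairing of $\pm\lambda_k$ and the bookkeeping with the factor $2$ and $\lambda_0=0$ are then identical. The only minor differences are that you derive the weights from the resolvent $p_N/p_{N+1}$ and normalize via $\sum_j w_j=1$, and you get $w_j=w_{-j}$ from the parity of $p_{N+1}$ rather than from the parity of the $P_j$, which implicitly requires the diagonal entries $a_k$ to vanish.
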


\begin{proof}
We again distinguish between the parity of $N + 1$. First, assume that $N + 1 = 2n + 1$ is odd. Define $\{P_j\}_{k = 1}^{N}$ to be the collection of normalized polynomials $P_j = p_j/(b_0 \cdots b_{j - 1})$. 

\par Consider the set of vectors $\{\mathbf{u}_k\}_{k = -n}^n \subseteq \R^{N + 1}$ given by
\begin{equation*}
    \mathbf{u}_k = \big(P_0(
    \lambda_k), P_1(
    \lambda_k), \dots, P_N(
    \lambda_k)\big)^\top \quad \text{for every $-n \leq k \leq n$,}
\end{equation*}
where $P_0 \equiv 1$. Then $\mathbf{u}_k$ is an eigenvector of $J$ with eigenvalue $
\lambda_k$~\cite{BG}. Expanding $\e_0$ under this eigenbasis yields
\[
\e_0 = \sum_{k = -n}^{n} \frac{\langle \e_0, \mathbf{u}_k \rangle_{\mathbb{C}^{N+1}} \mathbf{u}_k}{\|\mathbf{u}_k\|^2} = \sum_{k = -n}^{n} \frac{\mathbf{u}_k}{\|\mathbf{u}_k\|^2}.
\]
Hence
\begin{equation*}
    \langle e^{-iJt}\e_0, \e_0 \rangle_{\mathbb{C}^{N+1}} = \sum_{k = -n}^{n} \frac{\langle e^{-iJt}\mathbf{u}_k, \e_0 \rangle_{\mathbb{C}^{N+1}}}{\|\mathbf{u}_k\|^2} = \sum_{k = -n}^{n} \frac{\langle e^{-i
    \lambda_kt}\mathbf{u}_k, \e_0 \rangle_{\mathbb{C}^{N+1}}}{\|\mathbf{u}_k\|^2} = 
    \sum_{k = -n}^{n} \frac{e^{-i\lambda_kt}}{\|\mathbf{u}_k\|^2}.
\end{equation*}
Since $\{P_j\}_{k = 1}^{N}$ alternates in parity
, $\|\mathbf{u}_k\| = \|\mathbf{u}_{-k}\|$. Thus
\begin{align*}
    \langle e^{-iJt}\e_0, \e_0 \rangle_{\mathbb{C}^{N+1}} &= \frac{ 1
    }{\|\mathbf{u}_0\|^2} +  
    \sum_{k = 1}^{n} \frac{e^{i\lambda_kt} + e^{-i\lambda_kt}}{\|\mathbf{u}_k\|^2} =  
    \sum_{k = 0}^{n} c_k\cos{\lambda_kt},
\end{align*}
where $c_0 = 1/\|\mathbf{u}_0\|^2$ and $c_k = 2/\|\mathbf{u}_k\|^2$ for every $k \geq 1$.

To compute $\|\mathbf{u}_k\|^2$, we apply Theorem 3 in~\cite{H74} and Lemma~\ref{Lemma b's} to get
\begin{equation*}
\|\mathbf{u}_k\|^2 = \frac{|p_{N + 1}'(
\lambda_k)|}{b_0 \cdots b_{N - 1}} = |p_{N + 1}'(
\lambda_k)| \cdot \sum_{j = 0}^{n} \prod_{\substack{i = 0 \\ i \neq j}}^n \frac{1}{|\lambda_j^2 - \lambda_i^2|}.
\end{equation*}
Substituting \eqref{eq:p_{N + 1}' odd} into the equation above yields the coefficients in \eqref{eq:c_k}. 

\par Now suppose that $N + 1 = 2n$ is even. Consider the same normalized polynomials $\{P_j\}_{k = 1}^{N}$ as before, and define a set of vectors $\{\mathbf{v}_{\pm k}\}_{k = 1}^n \subseteq \R^{N + 1}$ by
\begin{equation*}
\mathbf{v}_{\pm k} = \big(P_0(
\pm \lambda_k), P_1(
\pm \lambda_k), \dots, P_N(
\pm \lambda_k)\big)^\top \quad \text{for every $1 \leq k \leq n$.}
\end{equation*}
An almost identical analysis as before shows us that
\[
\langle e^{-iJt}\e_0, \e_0 \rangle_{\mathbb{C}^{N+1}} = 
\sum_{k = 1}^{n} \frac{e^{i\lambda_kt} + e^{-i\lambda_kt}}{\|\mathbf{v}_k\|^2}= \sum_{k = 1}^{n} \widetilde{c}_k \cos{\lambda_kt}
\]
with $\widetilde{c}_k = 2/\|\mathbf{v}_k\|^2$ for every $k \geq 1$. Once again using Theorem 3 in~\cite{H74} and Lemma~\ref{Lemma b's} with \eqref{eq:p_{N + 1}' odd} gives us that
\begin{align*}
\|\mathbf{v}_k\|^2 &= \frac{|p_{N + 1}'(\mu + \lambda_k)|}{b_0 \cdots b_{N - 1}} = 2\lambda_k \prod_{\substack{i = 1 \\ i \neq k}}^n |\lambda_k^2 - \lambda_i^2| \cdot \sum_{j = 1}^{n} \frac{1}{\lambda_j} \prod_{\substack{i = 1 \\ i \neq j}}^n \frac{1}{|\lambda_j^2 - \lambda_i^2|},
\end{align*}
This matches our claimed values of $\widetilde{c}_k = 2/\|\mathbf{v}_k\|^2$ in \eqref{eq:widetilde{c}_k}, completing our proof.
\end{proof}

\begin{remark}
\label{Remark 1.3}
{Using similar techniques as above, one can show that the relevant inner product value for PST is given by
\[
(-1)^n\langle e^{-iJt}\e_0, \e_N \rangle_{\mathbb{C}^{N+1}} = \begin{cases} \displaystyle 
\sum_{k = 0}^n (-1)^k c_k\cos{\lambda_kt} &\text{if $N + 1 = 2n + 1$ is odd} \\
\displaystyle 
\sum_{k = 1}^n (-1)^k \widetilde{c}_k\cos{\lambda_kt} &\text{if $N + 1 = 2n$ is even,}
\end{cases}
\]
provided that $J$ is a persymmetric Jacobi matrix with a symmetric spectrum. 
}
\end{remark}
Fix a persymmetric Jacobi matrix $J$ of order $N + 1$ with a symmetric spectrum. For ease of notation, we define $\mathcal{A}: \R \to \C$ by 
\begin{equation}\label{eq:A}
\mathcal{A}(t) = 
\begin{cases}
    \displaystyle \sum_{k = 0}^{n} c_k\cos{\lambda_k t} &\text{if $N + 1 = 2n + 1$ is odd} \\
    \displaystyle \sum_{k = 1}^{n} \widetilde{c}_k\cos{\lambda_k t} &\text{if $N + 1 = 2n$ is even}
\end{cases}
\end{equation}
so that $\mathcal{A}(t) = \langle e^{-iJt}\e_0, \e_0 \rangle_{\mathbb{C}^{N+1}} 
$. As a further simplification, we will assume from here on that a symmetric $\sigma(J)$ is \emph{reduced}. That is, its ordered positive eigenvalues $0 < \lambda_1 < \cdots < \lambda_n$ of $J$ satisfy the following conditions:
\begin{itemize}
    \item The eigenvalues are integers that alternate in parity, with $\lambda_1$ being odd and $\gcd(\lambda_1, \dots, \lambda_n) = 1$ whenever $N + 1 = 2n + 1$.
    \item The eigenvalues are odd half-integers with $\gcd(2\lambda_1, \dots, 2\lambda_n) = 1$, and each consecutive difference $\lambda_{k + 1} - \lambda_k$ is odd whenever $N + 1 = 2n$.
\end{itemize}
Indeed, if the positive eigenvalues have a common factor, the only change compared to the reduced spectrum is time scaling (see Figures~\ref{fig:123} and~\ref{fig:369}). Evidently, the first instance of PST in a reduced symmetric spectrum is at $T_0 = \pi$ in either case (see~\cite{Kay10}).

\begin{figure}[ht]
    \centering
    \begin{minipage}{0.45\textwidth}
    \includegraphics[height=1.4in]{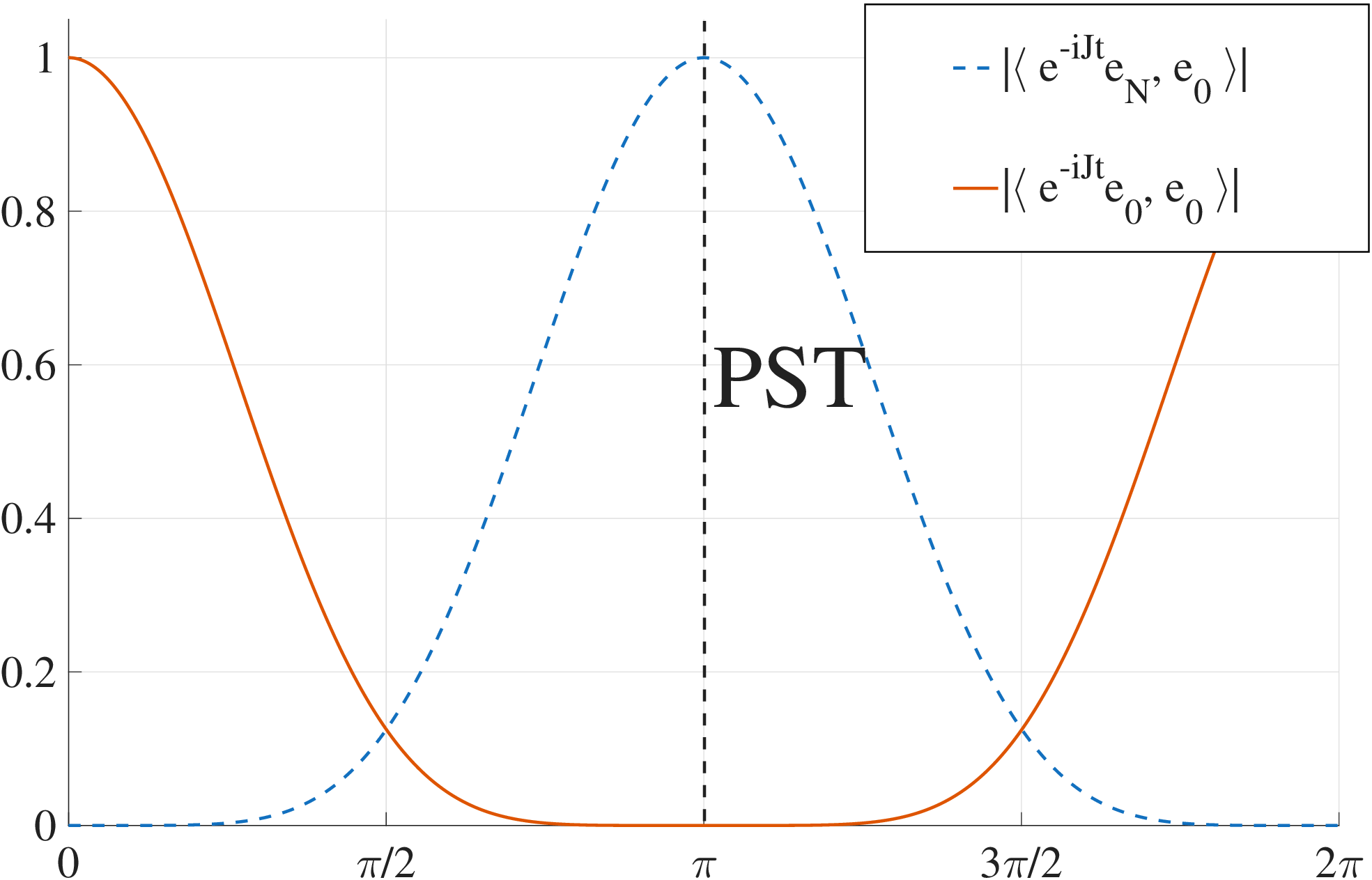}
    \caption{Examples of PST occurrences for $J$ with the reduced equidistant spectrum $\{0,\pm1,\pm2,\pm3\}$.}
    \label{fig:123}
    \end{minipage}\hfill
    \begin{minipage}{0.45\textwidth}
    \centering
    \includegraphics[height=1.4in]{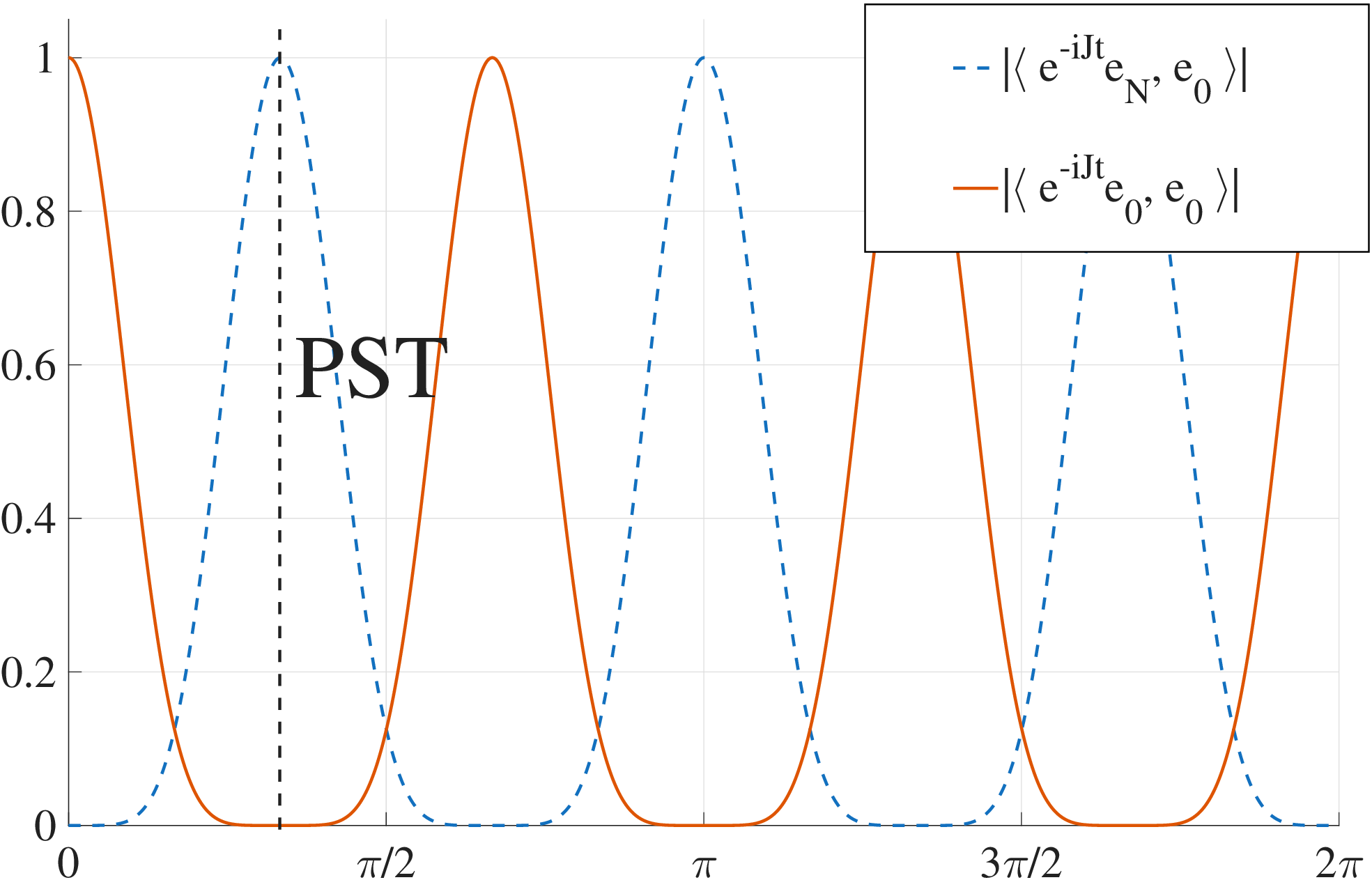}
    \caption{Examples of PST occurrences for $J$ with the equidistant spectrum $\{0,\pm3,\pm6,\pm9\}$.}
    \label{fig:369}
    \end{minipage}
\end{figure}

\begin{lemma}
\label{Lemma 1.4}
Suppose that $J$ is a Jacobi matrix of order $2n + 1$ realizing PST with a reduced symmetric spectrum. 
Then $\mathcal{A}(t)$ has a zero of multiplicity $2n$ at $t = \pi$. Furthermore,
\[
\mathcal{A}^{(2n)}(\pi) = \frac{1}{\displaystyle \sum_{j = 0}^{n} \prod_{\substack{i = 0 \\ i \neq j}}^{n} \frac{1}{|\lambda_j^2 - \lambda_i^2|}}.
\]
\end{lemma}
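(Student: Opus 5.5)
The plan is to expand $\mathcal{A}$ around $t=\pi$ using the parity structure of a reduced spectrum. This reduces every derivative at $\pi$ to a power sum over the squared eigenvalues, and those sums are evaluated with the classical Lagrange/divided-difference identity.

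\textbf{Step 1: shift to $t=\pi$.} Set $\lambda_0=0$ and write $t=\pi+s$. Since the spectrum is reduced, the $\lambda_k$ are integers alternating in parity with $\lambda_1$ odd, so $\lambda_k\equiv k \pmod 2$ for $0\le k\le n$. Hence $\cos(\lambda_k(\pi+s))=(-1)^{\lambda_k}\cos(\lambda_k s)=(-1)^k\cos(\lambda_k s)$. By Proposition~\ref{Prop: amplitudes},
\[
\mathcal{A}(\pi+s)=\sum_{k=0}^n (-1)^k c_k\cos(\lambda_k s).
\]
This is an even function of $s$, so all odd derivatives of $\mathcal{A}$ vanish at $\pi$. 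Differentiating termwise gives, for every $m\ge 0$,
\[
\mathcal{A}^{(2m)}(\pi)=(-1)^m\sum_{k=0}^n (-1)^k c_k\,\lambda_k^{2m}.
\]

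\textbf{Step 2: fix the signs.} Put $x_k=\lambda_k^2$, so that $0=x_0<x_1<\cdots<x_n$ are distinct. Let $S$ denote the denominator sum in \eqref{eq:c_k}, which by Lemma~\ref{Lemma b's} equals $1/(b_0\cdots b_{N-1})$. In the product $\prod_{i\neq k}(x_k-x_i)$ exactly $n-k$ factors are negative, so
\[
\prod_{i\ne k}|x_k-x_i| = (-1)^{n-k}\prod_{i\ne k}(x_k-x_i).
\]
Consequently
\[
(-1)^k c_k=\frac{(-1)^n}{S}\cdot\frac{1}{\prod_{i\neq k}(x_k-x_i)}.
\]

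\textbf{Step 3: evaluate the power sums.} For distinct nodes $x_0,\dots,x_n$ we use the standard identity
\[
\sum_{k=0}^n \frac{x_k^m}{\prod_{i\ne k}(x_k-x_i)} =
\begin{cases} 0 & \text{if } 0\le m\le n-1,\\ 1 & \text{if } m=n.\end{cases}
\]
This holds because the left side is the $n$th divided difference of $x^m$, equivalently the leading coefficient of the Lagrange interpolant of $x^m$ at the $n+1$ nodes. Combining Steps 1–3, $\mathcal{A}^{(2m)}(\pi)=0$ for $0\le m\le n-1$, and
\[
\mathcal{A}^{(2n)}(\pi)=(-1)^n\cdot\frac{(-1)^n}{S}=\frac{1}{S}.
\]
Together with the vanishing of all odd derivatives, this says $\mathcal{A}$ and its first $2n-1$ derivatives vanish at $\pi$ while $\mathcal{A}^{(2n)}(\pi)=1/S>0$. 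So the zero has multiplicity exactly $2n$, with the stated value.

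The main point requiring care is Step 2: the signs $(-1)^k$ produced by the shift must cancel exactly against the signs hidden in the absolute values of \eqref{eq:c_k}. This works only because the ordering $\lambda_0<\lambda_1<\cdots<\lambda_n$ matches the parity pattern $\lambda_k\equiv k\pmod 2$. Everything else is routine. The PST hypothesis enters only through persymmetry, which is needed for Proposition~\ref{Prop: amplitudes}.
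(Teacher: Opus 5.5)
Your proof is correct and follows essentially the same route as the paper. Both use $\cos(\lambda_k\pi)=(-1)^k$ from the parity alternation, remove the absolute values in $c_k$ via the sign $(-1)^{n+k}$, and evaluate $\sum_k \lambda_k^{2m}/\prod_{i\ne k}(\lambda_k^2-\lambda_i^2)$ as the $n$th divided difference of $x^m$, giving $\delta_{mn}$; your shift $t=\pi+s$ is just a cleaner way of showing the odd derivatives vanish, which the paper simply asserts.
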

\begin{proof}
Clearly, any odd-ordered derivative of $\mathcal{A}(t)$ vanishes at $t = \pi$. Thus, it suffices to calculate the value of $\mathcal{A}^{(2m)}(\pi)$ for every integer $0 \leq m \leq n$.

Consider the normalization constant
\begin{equation*}
\label{4.4}
\mathcal{N} = \sum_{j = 0}^n \prod_{\substack{i = 0 \\ i \neq j}}^n \frac{1}{|\lambda_j^2 - \lambda_i^2|}.
\end{equation*}
Note that the coefficients in \eqref{eq:c_k} are exactly
\[
c_k = \frac{1}{\mathcal{N}} \prod_{\substack{i = 0 \\ i \neq k}}^n \frac{1}{|\lambda_k^2 - \lambda_i^2|} = \frac{(-1)^{k + n}}{\mathcal{N}} \prod_{\substack{i = 0 \\ i \neq k}}^n \frac{1}{\lambda_k^2 - \lambda_i^2}.
\]
Therefore
\begin{align*}
\mathcal{A}^{(2m)}(\pi) &= \sum_{k = 0}^{n} (-1)^{k + m}\lambda_k^{2m}c_k \\
    &= \frac{(-1)^{m + n}}{\mathcal{N}} \sum_{k = 0}^{n} \lambda_k^{2m} \prod_{\substack{i = 0 \\ i \neq k}}^n \frac{1}{\lambda_k^2 - \lambda_i^2} \\
    &= \frac{(-1)^{m + n}\mathcal{P}[\lambda^2_0,\ldots,\lambda^2_n]}{\mathcal{N}} \\
    &=\frac{(-1)^{m + n}\delta_{mn}}{\mathcal{N}},
\end{align*}
where $\mathcal{P}[\lambda_0^2,\ldots,\lambda^2_n]$ denotes the divided difference of $\mathcal{P}(x) = x^m$ and $\delta_{ij}$ is the Kronecker delta function.
\end{proof}

\begin{remark}
    Similarly, in the case when $J$ is a Jacobi matrix of order $2n$ realizing PST with a reduced symmetric spectrum, 
 $\mathcal{A}(t)$ has a zero of multiplicity $2n-1$ at $t = \pi$ and
\[
\mathcal{A}^{(2n-1)}(\pi) = \frac{1}{\displaystyle \sum_{j = 1}^{n}\frac{1}{\lambda_j} \prod_{\substack{i = 1 \\ i \neq j}}^{n} \frac{1}{|\lambda_j^2 - \lambda_i^2|}}.
\]
\end{remark}
\begin{figure}[h!]
    \centering
    \includegraphics[width=0.5\linewidth]{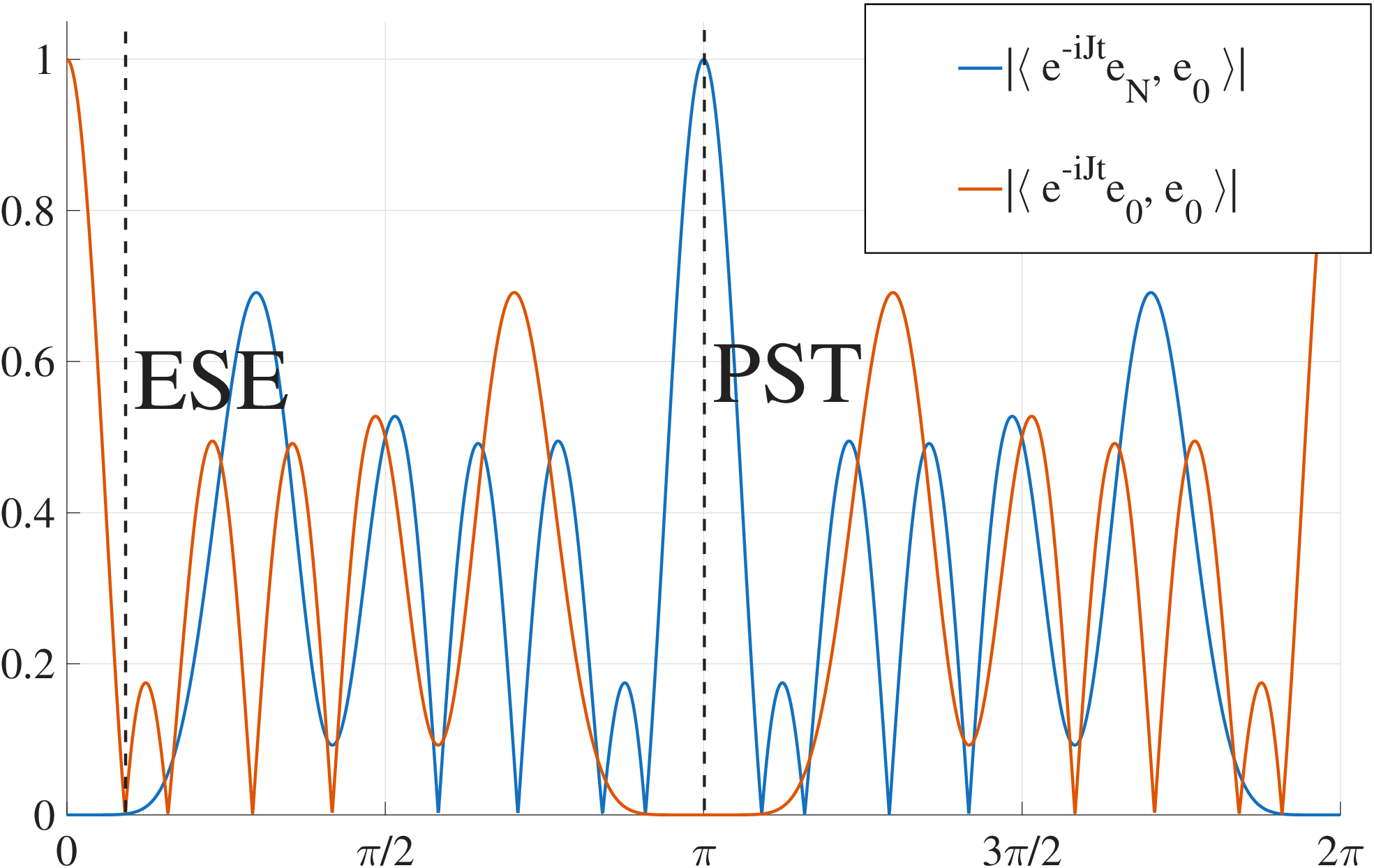}
    \caption{{Example of the ESE occurrences for $J$ with the reduced spectrum $\{0,\pm 3, \pm 8, \pm 9, \pm 12\}$. Given that $J$ realizes PST for the first time at $T_0=\pi$, then $\mathcal{A}(t)$ has a root of multiplicity $10$ at $t = \pi$.}}
    \label{fig:placeholder}
\end{figure}

As a corollary, we obtain an upper bound for the number of times $J$ achieves ESE whenever $J$ is of odd order.

\begin{corollary}\label{cor:odd case number of zeros}
A Jacobi matrix of order $2n + 1$ realizing PST with a reduced symmetric spectrum exhibits ESE an even number of times, up to a maximum of $\lambda_n - n$ instances.
\end{corollary}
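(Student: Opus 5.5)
The approach is to regard $\mathcal{A}(t)=\sum_{k=0}^n c_k\cos\lambda_k t$ as a trigonometric polynomial and count its zeros on a full period, using Lemma~\ref{Lemma 1.4} to account for the zero at $t=\pi$. Since the spectrum is reduced, every $\lambda_k$ is an integer, so $\mathcal{A}$ is $2\pi$-periodic and even. Because PST first occurs at $T_0=\pi$, the ESE instances are exactly the zeros of $\mathcal{A}$ in $(0,\pi)$.

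For the parity claim, I would use the symmetries of $\mathcal{A}$. The eigenvalues alternate in parity, so $\lambda_k\equiv k \pmod 2$, and $\cos\lambda_k(\pi-s)=(-1)^{k}\cos\lambda_k s$. Hence
\[
\mathcal{A}(\pi-s)=\sum_k (-1)^k c_k\cos\lambda_k s,
\]
which by Remark~\ref{Remark 1.3} is, up to sign, the PST amplitude. This does not obviously give a reflection symmetry about $\pi/2$, so I will argue the parity by a sign count instead. At $t=0$ we have $\mathcal{A}(0)=\sum c_k=1>0$. By Lemma~\ref{Lemma 1.4}, $\mathcal{A}$ has a zero of even order $2n$ at $\pi$, and $\mathcal{A}^{(2n)}(\pi)=1/\mathcal{N}>0$. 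Therefore $\mathcal{A}(t)>0$ for $t$ slightly less than $\pi$.

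It follows that on $(0,\pi)$ the function starts and ends positive, so, counted with multiplicity, $\mathcal{A}$ has an even number of zeros there. Strictly, ``number of instances'' counts distinct zeros. So I would either interpret the count with multiplicity, or note that a tangential (even-order) zero contributes evenly; with multiplicity the parity statement is clean. This is the subtle point I would flag.

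For the upper bound, I would write $\cos\lambda_k t$ as a polynomial in $\cos t$ via Chebyshev polynomials, $\cos\lambda_k t=T_{\lambda_k}(\cos t)$. Then $\mathcal{A}(t)=Q(\cos t)$, where $Q=\sum_k c_k T_{\lambda_k}$ has degree $\lambda_n$, since $c_n\neq 0$. The map $t\mapsto \cos t$ is a bijection from $[0,\pi]$ onto $[-1,1]$, so zeros of $\mathcal{A}$ in $(0,\pi)$ correspond to roots of $Q$ in $(-1,1)$. Near $t=\pi$, with $x=\cos t$, we have $1+x\approx (t-\pi)^2/2$. Hence the zero of order $2n$ at $t=\pi$ becomes a root of order $n$ of $Q$ at $x=-1$.

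Thus $Q$ has at most $\lambda_n-n$ further roots counted with multiplicity. This gives at most $\lambda_n-n$ zeros of $Q$ in $(-1,1)$, i.e.\ at most $\lambda_n-n$ ESE instances.

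The main obstacle is verifying the multiplicity transfer: that the order-$2n$ zero in $t$ yields exactly an order-$n$ root of $Q$ at $-1$. I would do this by writing $Q(x)=(1+x)^m R(x)$ with $R(-1)\neq0$. Then $\mathcal{A}(t)=(1+\cos t)^m R(\cos t)$ vanishes to order exactly $2m$ at $\pi$, and comparing with Lemma~\ref{Lemma 1.4} forces $m=n$.
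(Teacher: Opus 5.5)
Your proof is correct. The upper bound follows the paper's route. You write $\mathcal{A}(t)$ as a degree-$\lambda_n$ polynomial in $\cos t$, factor out $(1+x)^n$ coming from the order-$2n$ zero at $t=\pi$, and count the remaining degree $\lambda_n-n$. Your check that $Q=(1+x)^mR$ with $R(-1)\neq 0$ forces $m=n$ makes explicit a step the paper only asserts.

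The parity argument is where you genuinely differ. The paper's reasoning is that complex roots of the quotient come in conjugate pairs and $\lambda_n-n$ is even. Taken literally, this only shows that the \emph{total} number of real roots is even. It says nothing about how many of them lie in $(-1,1)$ rather than outside $[-1,1]$. To close that gap, one needs the sign of the quotient at $x=\pm1$ together with its leading coefficient, and the paper never writes this down.

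Your endpoint argument supplies exactly that information. In polynomial form, $\mathcal{A}(0)=1$ gives $R(1)=2^{-n}>0$. Likewise $\mathcal{A}^{(2n)}(\pi)=1/\mathcal{N}>0$ gives $R(-1)=2^n\mathcal{A}^{(2n)}(\pi)/(2n)!>0$. Hence $R$ has an even number of roots in $(-1,1)$, counted with multiplicity. Since $\cos$ is a diffeomorphism of $(0,\pi)$ onto $(-1,1)$, multiplicities in $t$ and in $x$ agree. Your argument never uses the parity of $\lambda_n-n$. Once the paper's route is completed, the conjugate-pair remark becomes redundant given your two endpoint signs. The price is that you need the \emph{sign} of $\mathcal{A}^{(2n)}(\pi)$ from Lemma~\ref{Lemma 1.4}, not merely that it is nonzero.

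Your caveat about multiplicity is well taken, and it applies equally to the paper's proof. Both arguments establish evenness only when a tangential zero is counted twice. The bound $\lambda_n-n$, by contrast, holds for distinct zeros as well.
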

\begin{proof}
Let $P(x)$ be the unique real polynomial satisfying $P(\cos(t))=\mathcal{A}(t)$. Since $\mathcal{A}(t)$ has a root of order $2n$ at $t = \pi$ by our previous lemma, $P(x)$  has a root of order of order $n$ at $x = -1$. Hence, $P(x) = (x + 1)^nQ(x)$ for some unique real polynomial $Q(x)$. By the Fundamental Theorem of Algebra, $Q(x)$ has at most $\deg Q = \deg P - n = \lambda_n - n$ roots (counting multiplicity) within the open interval $(-1, 1)$. Because complex roots of $Q(x)$ occur in conjugate pairs and $\lambda_n - n$ is even, $Q(x)$ admits an even number of real zeros in this interval. Thus, by the bijective correspondence between the zeros of $P(x)$ in $(-1, 1)$ and the zeros of $\mathcal{A}(t)$ in $(0, \pi)$, there exist at most $\lambda_n - n$ zeros of $\mathcal{A}(t)$ in $(0, \pi)$. The result then follows from the definition of $\mathcal{A}(t)$ in \eqref{eq:A}.
\end{proof}
\begin{corollary}\label{cor:even case number of zeros}
A Jacobi matrix of order $2n$ realizing PST with a reduced symmetric spectrum exhibits ESE a finite number of times, up to a maximum of $2\lambda_n - n$ instances.
\end{corollary}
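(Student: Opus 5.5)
For even order $2n$ the reduced eigenvalues $\lambda_k$ are odd half-integers, so I will change variables to $s = t/2$. Then $\cos(\lambda_k t) = \cos(2\lambda_k s)$ with $2\lambda_k$ an odd integer, and $t\in(0,\pi)$ corresponds to $s\in(0,\pi/2)$. Let $m_k = 2\lambda_k$. Each $\cos(m_k s)$ is the Chebyshev polynomial $T_{m_k}(\cos s)$, so there is a unique real polynomial $P$ with
\[
P(\cos s) = \mathcal{A}(2s) = \sum_{k=1}^n \widetilde{c}_k T_{2\lambda_k}(\cos s).
\]
Its degree is $2\lambda_n$, and it is odd because every $m_k$ is odd.

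Next I transfer the zero at the PST time. The remark following Lemma~\ref{Lemma 1.4} says $\mathcal{A}$ has a zero of multiplicity exactly $2n-1$ at $t=\pi$, i.e. at $s=\pi/2$, which is $x=\cos s = 0$. Near $s=\pi/2$ the map $s\mapsto\cos s$ is a local diffeomorphism, with derivative $-\sin(\pi/2)\neq 0$. Hence the order of vanishing is preserved, and $P$ has a zero of order exactly $2n-1$ at $x=0$. Consistently, $P$ is odd, so this order is odd. Writing $P(x) = x^{2n-1}Q(x)$ gives $\deg Q = 2\lambda_n - 2n + 1$ and $Q(0)\neq 0$.

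Now I count ESE instances. The map $s\mapsto \cos s$ is a bijection from $(0,\pi/2)$ onto $(0,1)$, so ESE times in $(0,\pi)$ correspond bijectively to zeros of $Q$ in $(0,1)$. Since $P$ is odd and $x^{2n-1}$ is odd, $Q$ is even. Its real nonzero roots therefore come in pairs $\pm x$, and those in $(0,1)$ number at most half of the roots of $Q$ in $(-1,1)\setminus\{0\}$. This gives at most $(2\lambda_n-2n+1)/2$ zeros, which in fact sharpens to at most $\lambda_n - n$ because $\deg Q$ is even (namely $2\lambda_n - 2n + 1$, where $2\lambda_n$ is odd). Either way the count is finite and at most $2\lambda_n - n$, which proves the stated bound.

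The only delicate step is the multiplicity transfer in the second paragraph. It relies on the even-order remark to Lemma~\ref{Lemma 1.4} and on the nonvanishing of the derivative of $\cos$ at $\pi/2$. One could also sidestep the multiplicity entirely: even without the factorization, $P$ has at most $\deg P = 2\lambda_n$ roots, and by oddness at most $\lambda_n$ of them lie in $(0,1)$. That already gives finiteness and a bound weaker than $2\lambda_n - n$ fails to be needed only when $\lambda_n \le 2\lambda_n - n$, i.e. $n\le\lambda_n$, which always holds because the $\lambda_k$ are distinct positive half-integers of increasing size. So the stated bound follows in either case, and the multiplicity argument is needed only for sharpness.
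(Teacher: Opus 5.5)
Your main argument is correct and follows the route the paper indicates. The paper's only justification is that one may take $P(\cos(t/2))=\mathcal{A}(t)$ and repeat the odd-case count. You supply the missing details: degree $2\lambda_n$, the zero of order $2n-1$ at $x=0$ carried over through $s\mapsto\cos s$, and the bijection of $(0,\pi)$ onto $(0,1)$. Your parity observation ($P$ odd, $Q$ even) then gives the sharper bound $\lambda_n-n+\tfrac12$, which is at most $2\lambda_n-n$ because $\lambda_n\ge\tfrac12$.

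Two side claims are false, however, and should be removed; neither is needed for the statement.

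First, nothing ``sharpens'' the count to $\lambda_n-n$. Since $\lambda_n$ is a half-integer, $(2\lambda_n-2n+1)/2=\lambda_n-n+\tfrac12$ is already an integer, and this bound is attained. For the reduced spectrum $\{\pm\tfrac32,\pm\tfrac52\}$ one gets $\mathcal{A}(t)=\tfrac58\cos\tfrac{3t}{2}+\tfrac38\cos\tfrac{5t}{2}$ and $P(x)=x^3(6x^2-5)$. This gives exactly one ESE, at $\cos(\tau/2)=\sqrt{5/6}$, whereas $\lambda_n-n=\tfrac12$.

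Second, in your fallback paragraph the claim $n\le\lambda_n$ fails for the Krawtchouk spectrum $\lambda_k=k-\tfrac12$. The fallback conclusion still holds, for a different reason. Oddness of $P$ gives at most $\lambda_n-\tfrac12$ roots in $(0,1)$, because $0$ is a root and the nonzero roots pair off. Moreover $\lambda_n\ge n-\tfrac12$, since $\lambda_1\ge\tfrac12$ and consecutive gaps are odd integers. Together these give $\lambda_n-\tfrac12\le 2\lambda_n-n$.
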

In this case, polynomial $P(x)$ can be chosen to satisfy $P(\cos(t/2))=\mathcal{A}(t)$. 

\section{Spectral Fusion}
\label{Section 4}
In contrast to the ``spectral surgery'' framework introduced in~\cite{VZh12}, which removes a part of the spectrum, we introduce \emph{spectral fusion} as its inverse operation. This process involves the addition of an even number of eigenvalues symmetrically at the end of the chain. By doing so, we construct Hamiltonians for a spin chain elongated by two qubits while conserving the desired properties.

\subsection{Why does it work?}
\par Fix a persymmetric Jacobi matrix $J$ of order $N + 1$, and assume that its spectrum $\sigma(J)$ is both reduced and symmetric. Define a sequence of persymmetric Jacobi matrices $\{J_m\}_{m = 0}^\infty$ by asserting that $\sigma(J_m) = \sigma(J) \cup \{\pm \lambda_{n + 1}\}$ (see~\cite[Theorem 3]{H74}), where
\[
\lambda_{n + 1} = \lambda_n + 2m + 1.
\]
Correspondingly, define a sequence of functions $\mathcal{A}_m: \mathbb{R} \to \mathbb{C}$ by
\begin{equation*}
\mathcal{A}_m(t) =
\begin{cases}
    \displaystyle \sum_{k = 0}^{n + 1} d_k \cos{\lambda_kt} &\text{if $N + 1 = 2n + 1$ is odd} \\
    \displaystyle \sum_{k = 1}^{n + 1} \widetilde{d}_k \cos{\lambda_kt} &\text{if $N + 1 = 2n$ is even}
\end{cases}
\end{equation*}
where
\begin{equation*}
d_k = \prod_{\substack{i = 0 \\ i \neq k}}^{n + 1} \frac{1}{|\lambda_k^2 - \lambda_i^2|} \Bigg/\sum_{j = 0}^{n + 1} \prod_{\substack{i = 0 \\ i \neq j}}^{n + 1} \frac{1}{|\lambda_j^2 - \lambda_i^2|} \quad \text{for every $0 \leq k \leq n + 1$}
\end{equation*}
and
\begin{equation*}
\widetilde{d}_k = \frac{1}{\lambda_k} \prod_{\substack{i = 1 \\ i \neq k}}^{n + 1} \frac{1}{|\lambda_k^2 - \lambda_i^2|} \Bigg/\sum_{j = 1}^{n + 1} \frac{1}{\lambda_j}  \prod_{\substack{i = 1 \\ i \neq j}}^{n + 1} \frac{1}{|\lambda_j^2 - \lambda_i^2|} \quad \text{for every $1 \leq k \leq n + 1$}
\end{equation*}
\begin{lemma}
$\mathcal{A}_m \to \mathcal{A}$ uniformly as $m \to \infty$.
\end{lemma}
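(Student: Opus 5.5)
We prove the claim by comparing coefficients directly. Both $\mathcal{A}_m$ and $\mathcal{A}$ are finite cosine sums, and $|\cos|\le 1$, so
\[
\sup_{t\in\R}|\mathcal{A}_m(t)-\mathcal{A}(t)| \le \sum_{k=0}^{n}|d_k-c_k| + d_{n+1}
\]
in the odd case (similarly with tildes in the even case). It therefore suffices to show that $d_k\to c_k$ for each fixed $k\le n$ and that $d_{n+1}\to 0$ as $m\to\infty$. Throughout, $\lambda_0,\dots,\lambda_n$ are fixed and only $\lambda_{n+1}=\lambda_n+2m+1\to\infty$ varies.

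The key step is to factor out the common large quantity. Write $\pi_k=\prod_{i\ne k,\, i\le n}|\lambda_k^2-\lambda_i^2|^{-1}$, so that $c_k=\pi_k/\mathcal{N}$ with $\mathcal{N}=\sum_{j\le n}\pi_j$. For $k\le n$, the numerator of $d_k$ equals $\pi_k/(\lambda_{n+1}^2-\lambda_k^2)$. The $(n+1)$-st term of the normalizing sum is
\[
\pi_{n+1}:=\prod_{i\le n}(\lambda_{n+1}^2-\lambda_i^2)^{-1}=O(\lambda_{n+1}^{-2(n+1)}).
\]
Multiplying the numerator and denominator of $d_k$ by $\lambda_{n+1}^2$ gives
\[
d_k=\frac{\pi_k\,\lambda_{n+1}^2/(\lambda_{n+1}^2-\lambda_k^2)}{\sum_{j\le n}\pi_j\,\lambda_{n+1}^2/(\lambda_{n+1}^2-\lambda_j^2)+\lambda_{n+1}^2\pi_{n+1}}.
\]
Each ratio $\lambda_{n+1}^2/(\lambda_{n+1}^2-\lambda_j^2)$ tends to $1$, and $\lambda_{n+1}^2\pi_{n+1}=O(\lambda_{n+1}^{-2n})\to0$ since $n\ge1$. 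Since $\mathcal{N}>0$ is fixed, $d_k\to\pi_k/\mathcal{N}=c_k$. By the same normalization, $d_{n+1}=\lambda_{n+1}^2\pi_{n+1}/(\text{denominator})\to 0$.

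The even case is identical, with $\pi_k$ replaced by $\lambda_k^{-1}\prod_{i\ne k}|\lambda_k^2-\lambda_i^2|^{-1}$. Here the new term $\lambda_{n+1}^{-1}\prod_{i\le n}(\lambda_{n+1}^2-\lambda_i^2)^{-1}$, after multiplication by $\lambda_{n+1}^2$, is $O(\lambda_{n+1}^{1-2n})$, which tends to $0$ for $n\ge1$.

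The only point requiring care is the choice of scaling. Without multiplying through by $\lambda_{n+1}^2$, every term of the normalizing sum tends to $0$ and the quotient is of indeterminate form $0/0$; the rescaling resolves this. Everything else is a bound on finitely many terms, and this is what makes the convergence uniform in $t$ rather than merely pointwise.
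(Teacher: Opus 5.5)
Your proof is correct and follows the paper's route: bound $\sup_t|\mathcal{A}_m(t)-\mathcal{A}(t)|$ by $d_{n+1}+\sum_{k\le n}|d_k-c_k|$ (with tildes in the even case), then show coefficientwise convergence as only $\lambda_{n+1}$ grows. The paper writes out $d_k-c_k$ explicitly to get rates ($O(m^{-2})$ in the odd case, $O(m^{-1})$ in the even case), whereas your rescaling by $\lambda_{n+1}^2$ reaches the limits more directly (the same rates can be read off from it), and you make explicit the hypothesis $n\ge 1$, which the paper uses only implicitly.
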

\begin{proof}
Note that
\[
\mathcal{A}_m(t) - \mathcal{A}(t) =
\begin{cases}
\displaystyle d_{n + 1} \cos{\lambda_{n + 1}t} + \sum_{k = 0}^{n} (d_k - c_k) \cos{{\lambda_kt}}  &\text{if $N + 1 = 2n + 1$ is odd} \\
\displaystyle \widetilde{d}_{n + 1} \cos{\lambda_{n + 1}t} + \sum_{k = 1}^{n} (\widetilde{d}_k - \widetilde{c}_k) \cos{{\lambda_kt}} &\text{if $N + 1 = 2n$ is even}
\end{cases}
\]
for every integer $m \geq 0$. Hence,
\[
\sup_{t \in \mathbb{R}} |\mathcal{A}_m(t) - \mathcal{A}(t)| \leq
\begin{cases}
\displaystyle d_{n + 1} + \sum_{k = 0}^n |d_k - c_k| &\text{if $N + 1 = 2n + 1$ is odd} \\
\displaystyle \widetilde{d}_{n + 1} + \sum_{k = 1}^n |\widetilde{d}_k - \widetilde{c}_k| &\text{if $N + 1 = 2n$ is even.}
\end{cases}
\]
To show that $\mathcal{A}_m \to \mathcal{A}$ uniformly, it suffices to check that
\begin{itemize}
    \item $d_{n + 1} \to 0$ and $d_k \to c_k$ (for every $0 \leq k \leq n$) as $m \to \infty$ whenever $N + 1$ is odd, and
    \item $\widetilde{d}_{n + 1} \to 0$ and $\widetilde{d}_k \to \widetilde{c}_k$ (for every $1 \leq k \leq n$) as $m \to \infty$ whenever $N + 1$ is even.
\end{itemize}
We consider each case separately.

Assume that $N + 1 = 2n + 1$ is odd. Since
\begin{align*}
    d_{n+1} &= \frac{\displaystyle \prod_{i = 0}^n \frac{1}{\lambda_{n + 1}^2 - \lambda_i^2}}{\displaystyle \sum_{j = 0}^n \frac{1}{\lambda_{n + 1}^2 - \lambda_j^2} \cdot \prod_{\substack{i = 0 \\ i \neq k}}^{n} \frac{1}{|\lambda_{j}^2 - \lambda_i^2|} + \prod_{\substack{i = 0 }}^n \frac{1}{\lambda_{n + 1}^2 - \lambda_i^2}}
\end{align*}
and
\begin{align*}
d_{k} - c_k &= \left(\displaystyle \frac{1}{(\lambda_{n+1}^2 - \lambda_k^2)\displaystyle\sum_{j = 0}^{n+1} \prod_{\substack{i = 0 \\ i \neq j}}^{n+1} \frac{1}{|\lambda_j^2 - \lambda_i^2|}} - \frac{1}{\displaystyle\sum_{j = 0}^{n} \prod_{\substack{i = 0 \\ i \neq j}}^n \frac{1}{|\lambda_j^2 - \lambda_i^2|}}\right) \cdot \prod_{\substack{i = 0 \\ i \neq k}}^n \frac{1}{|\lambda_k^2 - \lambda_i^2|}\\
&= \displaystyle \frac{\displaystyle\sum_{j = 0}^{n} \prod_{\substack{i = 0 \\ i \neq j}}^n \frac{1}{|\lambda_j^2 - \lambda_i^2|}-(\lambda_{n+1}^2 - \lambda_k^2)\displaystyle\sum_{j = 0}^{n+1} \prod_{\substack{i = 0 \\ i \neq j}}^{n+1} \frac{1}{|\lambda_j^2 - \lambda_i^2|}}{\left((\lambda_{n+1}^2 - \lambda_k^2)\displaystyle\sum_{j = 0}^{n+1} \prod_{\substack{i = 0 \\ i \neq j}}^{n+1} \frac{1}{|\lambda_j^2 - \lambda_i^2|}\right)\left(\displaystyle\sum_{j = 0}^{n} \prod_{\substack{i = 0 \\ i \neq j}}^n \frac{1}{|\lambda_j^2 - \lambda_i^2|}\right)}\cdot\prod_{\substack{i = 0 \\ i \neq k}}^n \frac{1}{|\lambda_k^2 - \lambda_i^2|}
\\
&= \displaystyle \frac{\displaystyle \sum_{j = 0}^{n} \left(1-\frac{\lambda_{n+1}^2 - \lambda_k^2}{\lambda_{n+1}^2 - \lambda_j^2}\right) \cdot \prod_{\substack{i = 0 \\ i \neq j}}^n \frac{1}{|\lambda_j^2 - \lambda_i^2|}-\prod_{\substack{i = 0 \\ i \neq k}}^{n} \frac{1}{\lambda_{n+1}^2 - \lambda_i^2}}{\left((\lambda_{n+1}^2 - \lambda_k^2)\displaystyle\sum_{j = 0}^{n+1} \prod_{\substack{i = 0 \\ i \neq j}}^{n+1} \frac{1}{|\lambda_j^2 - \lambda_i^2|}\right)\left(\displaystyle\sum_{j = 0}^{n} \prod_{\substack{i = 0 \\ i \neq j}}^n \frac{1}{|\lambda_j^2 - \lambda_i^2|}\right)}\cdot\prod_{\substack{i = 0 \\ i \neq k}}^n \frac{1}{|\lambda_k^2 - \lambda_i^2|}
\end{align*}
for all $0 \leq k \leq n$, then $d_{n+1} = O(m^{-2n})$ and $|d_{k}-c_k| = O(m^{-2})$. Hence
\[
\sup_{t \in \mathbb{R}} |\mathcal{A}_m(t) - \mathcal{A}(t)| \leq d_{n + 1} + \sum_{k = 0}^n |d_k - c_k| = O(1/m^2),
\]
implying that $\mathcal{A}_m \to \mathcal{A}$ uniformly as $m \to \infty$ in the odd case.

Now assume that $N + 1 = 2n$ is even. Similarly, since
\[
\widetilde{d}_{n + 1} = \frac{\displaystyle \frac{1}{\lambda_{n + 1}} \prod_{i = 1}^n \frac{1}{\lambda_{n + 1}^2 - \lambda_i^2}}{\displaystyle \sum_{j = 1}^n \frac{1}{\lambda_j(\lambda_{n + 1}^2 - \lambda_j^2)}  \prod_{\substack{i = 1 \\ i \neq j}}^{n} \frac{1}{|\lambda_j^2 - \lambda_i^2|} + \frac{1}{\lambda_{n + 1}}  \prod_{i = 1}^n \frac{1}{\lambda_{n + 1}^2 - \lambda_i^2}}
\]
and
\begin{align*}
\widetilde{d}_k - \widetilde{c}_k &= \left(\displaystyle \frac{1}{(\lambda_{n+1}^2 - \lambda_k^2)\displaystyle \sum_{j = 1}^{n+1}  \frac{1}{\lambda_j} \prod_{\substack{i = 1 \\ i \neq j}}^{n+1} \frac{1}{|\lambda_j^2 - \lambda_i^2|}} - \frac{1}{\displaystyle \sum_{j = 1}^{n}  \frac{1}{\lambda_j} \prod_{\substack{i = 1 \\ i \neq j}}^n \frac{1}{|\lambda_j^2 - \lambda_i^2|}}\right) \cdot \frac{1}{\lambda_k} \prod_{\substack{i = 1 \\ i \neq k}}^n \frac{1}{|\lambda_k^2 - \lambda_i^2|} \\
    &= \frac{\displaystyle \sum_{j = 1}^{n}  \frac{1}{\lambda_j} \prod_{\substack{i = 1 \\ i \neq j}}^n \frac{1}{|\lambda_j^2 - \lambda_i^2|} - (\lambda_{n+1}^2 - \lambda_k^2) \sum_{j = 1}^{n+1}  \frac{1}{\lambda_j} \prod_{\substack{i = 1 \\ i \neq j}}^{n+1} \frac{1}{|\lambda_j^2 - \lambda_i^2|}}{\displaystyle \left((\lambda_{n+1}^2 - \lambda_k^2) \sum_{j = 1}^{n+1}  \frac{1}{\lambda_j} \prod_{\substack{i = 1 \\ i \neq j}}^{n+1} \frac{1}{|\lambda_j^2 - \lambda_i^2|}\right) \left(\sum_{j = 1}^{n}  \frac{1}{\lambda_j} \prod_{\substack{i = 1 \\ i \neq j}}^n \frac{1}{|\lambda_j^2 - \lambda_i^2|}\right)} \cdot \frac{1}{\lambda_k} \prod_{\substack{i = 1 \\ i \neq k}}^n \frac{1}{|\lambda_k^2 - \lambda_i^2|} \\
        &= \frac{\displaystyle \sum_{j = 1}^n \left(1 - \frac{\lambda_{n + 1}^2 - \lambda_k^2}{\lambda_{n + 1}^2 - \lambda_j^2}\right) \cdot \prod_{\substack{i = 1 \\ i \neq j}}^n \frac{1}{|\lambda_j^2 - \lambda_i^2|} - \frac{1}{\lambda_{n + 1}} \prod_{\substack{i = 1 \\ i \neq k}}^n \frac{1}{\lambda_{n + 1}^2 - \lambda_i^2}}{\displaystyle \left((\lambda_{n+1}^2 - \lambda_k^2) \sum_{j = 1}^{n+1}  \frac{1}{\lambda_j} \prod_{\substack{i = 1 \\ i \neq j}}^{n+1} \frac{1}{|\lambda_j^2 - \lambda_i^2|}\right) \left(\sum_{j = 1}^{n}  \frac{1}{\lambda_j} \prod_{\substack{i = 1 \\ i \neq j}}^n \frac{1}{|\lambda_j^2 - \lambda_i^2|}\right)} \cdot \frac{1}{\lambda_k} \prod_{\substack{i = 1 \\ i \neq k}}^n \frac{1}{|\lambda_k^2 - \lambda_i^2|}
\end{align*}
for all $1 \leq k \leq n$, then $\widetilde{d}_{n + 1} = O(m^{-(2n - 1)})$ and $|\widetilde{d}_k - \widetilde{c}_k| = O(m^{-1})$. We similarly arrive at
\[
\sup_{t \in \R} |\mathcal{A}_m(t) - \mathcal{A}(t)| \leq \widetilde{d}_{n + 1} + \sum_{k = 1}^n |\widetilde{d}_k - \widetilde{c}_k| = O(1/m),
\]
implying that $\mathcal{A}_m \to \mathcal{A}$ uniformly as $m \to \infty$ in the even case as well.
\end{proof}

\begin{remark}
    Similar argument shows that $\mathcal{A}'_m $ converges to $\mathcal{A}'$ uniformly as $m \to \infty$.
\end{remark}

\subsection{Main Result}
The next theorem demonstrates a method for constructing infinite families of Jacobi matrices that exhibit PST, with or without ESE, from lower-order PST matrices exhibiting the same ESE property.
\begin{figure}[H]
    \centering
        \begin{minipage}{0.45\textwidth}
           \centering
    \includegraphics[height=1.4in]{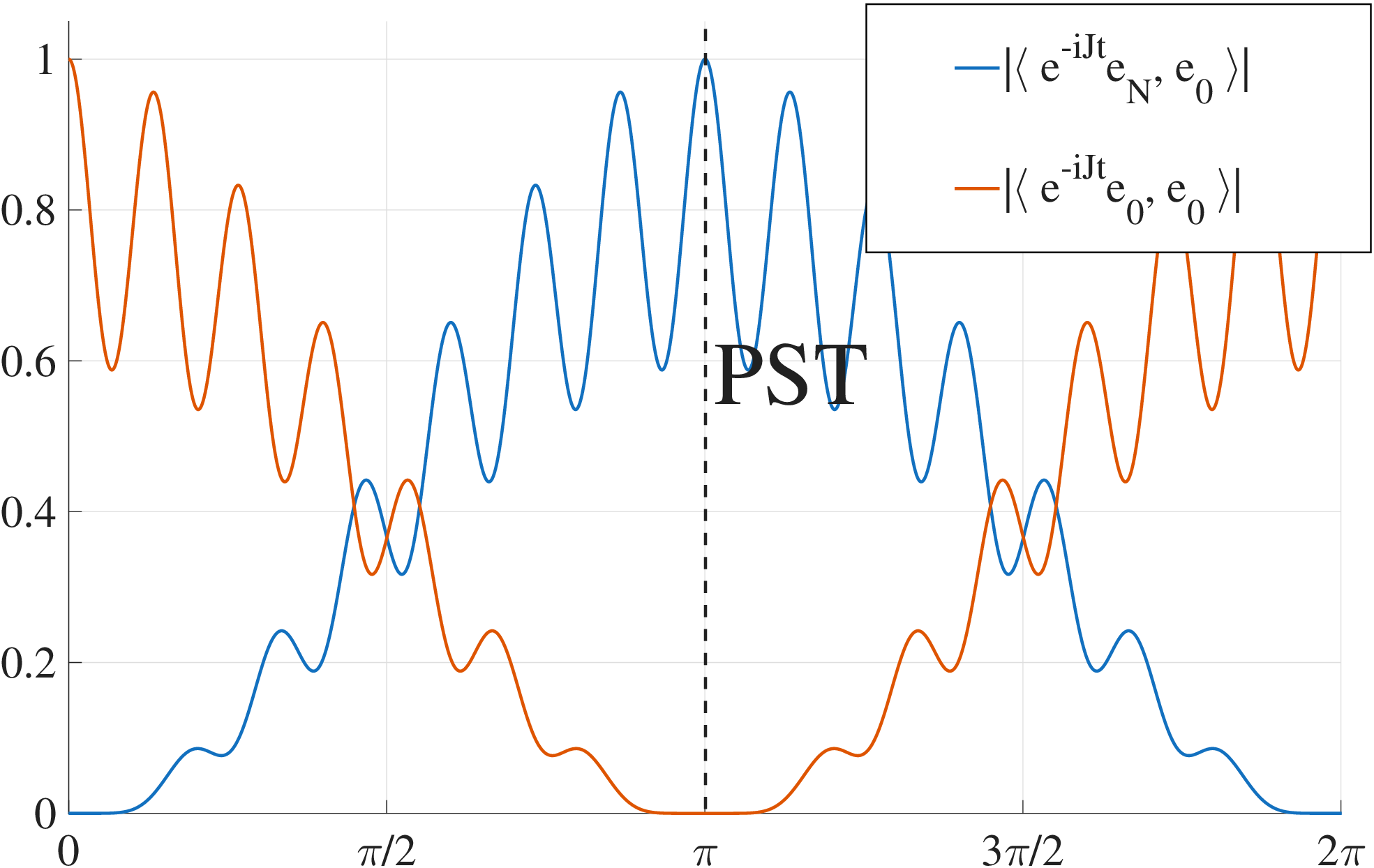}
    \caption{Example of $J$ with reduced spectrum $\{ 0,\pm1,\pm14,\pm15,\pm16\}$.}
    \label{fig: not true}
    \end{minipage}
\end{figure}
\begin{figure}[H]
    \centering
    \begin{minipage}{0.45\textwidth}
    \centering
    \includegraphics[height=1.4in]{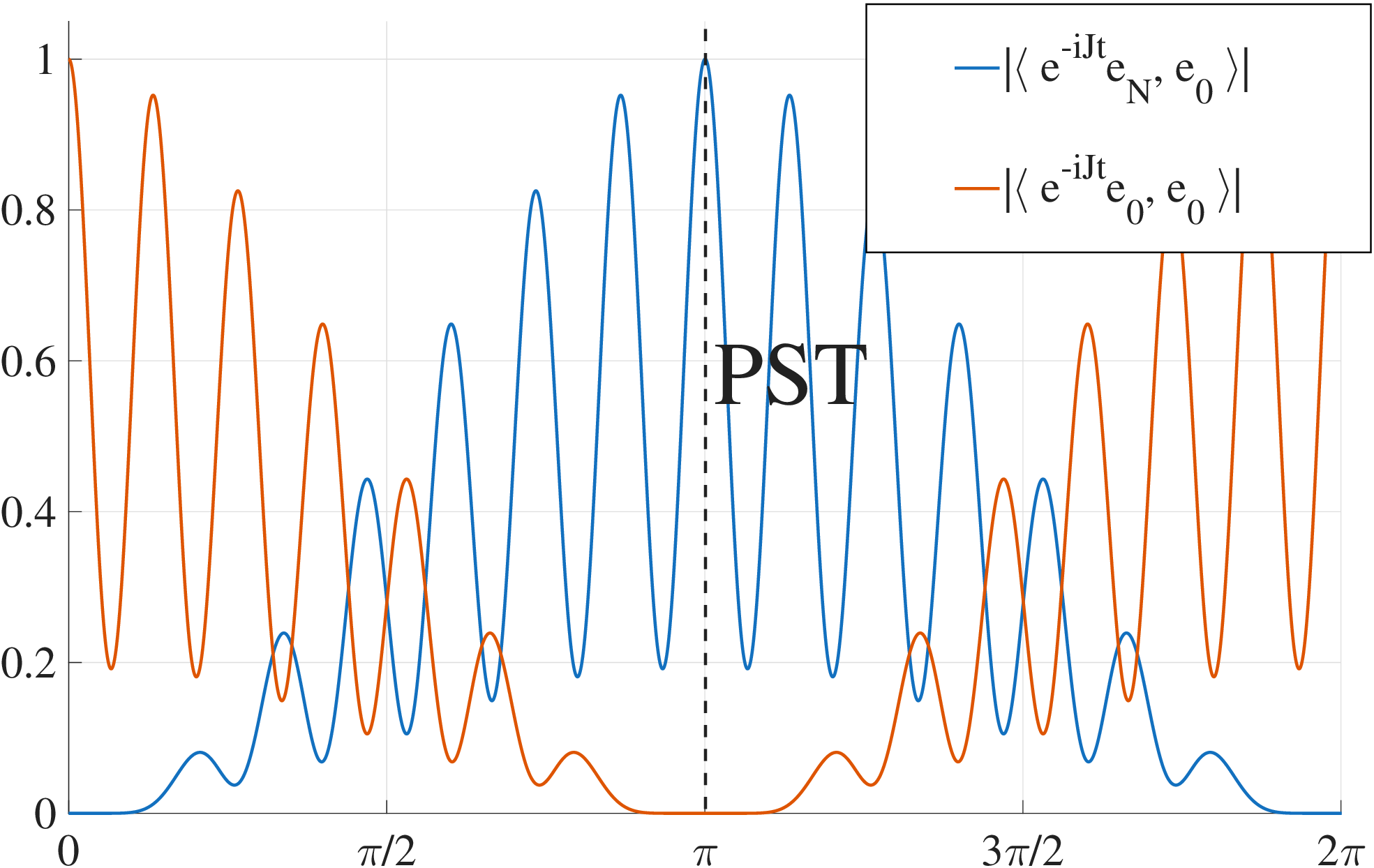}
    \caption{Example of no ESE occurrences for $J$ with the reduced spectrum $\{0,\pm1,\pm14,\pm15,\pm16,\pm19\}$.}
    \label{fig:1 14-16 19}
    \end{minipage}\hfill
    \begin{minipage}{0.45\textwidth}
    \centering
    \includegraphics[height=1.4in]{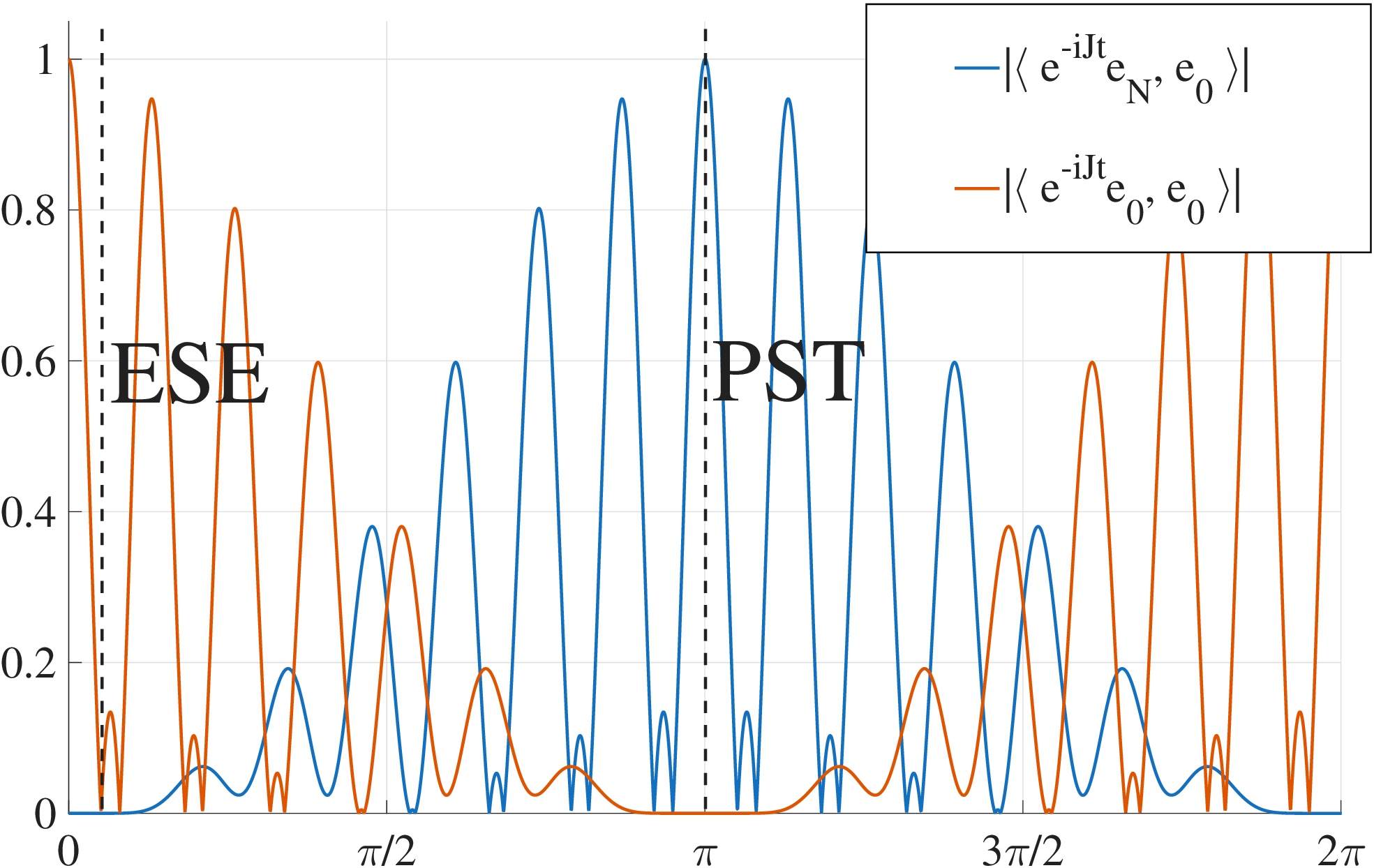}
    \caption{Example of the ESE occurrences for $J$ with the reduced spectrum $\{0,\pm1,\pm14,\pm15,\pm16,\pm17\}$.}
    \label{fig:1 14-17}
    \end{minipage}
\end{figure}

\begin{theorem}
\label{thm: same number of zeros }
Suppose that $J$ is a Jacobi matrix realizing PST and $\mathcal{A}(t)$ has only simple zeros on $(0,\pi)$. Then there exists a positive integer $M$ such that for every integer $m \geq M$, the number of ESE occurrences is identical for the Jacobi matrices $J$ and $J_m$.
\end{theorem}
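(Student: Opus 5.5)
The plan is a perturbation argument: away from $t=\pi$ we use the uniform convergence $\mathcal{A}_m\to\mathcal{A}$ and $\mathcal{A}_m'\to\mathcal{A}'$ (the preceding lemma and remark), and near $t=\pi$ we use a separate argument based on the divided-difference structure from Lemma~\ref{Lemma 1.4}. We treat the odd case $N+1=2n+1$; the even case is the same with the obvious changes. Let $0<\tau_1<\dots<\tau_r<\pi$ be the zeros of $\mathcal{A}$ in $(0,\pi)$. These are finitely many by Corollary~\ref{cor:odd case number of zeros}, and simple by hypothesis. Note also that $\mathcal{A}_m(0)=\mathcal{A}(0)=1$.

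\emph{Step 1 (zeros away from the endpoints).} Around each $\tau_j$ choose a small interval $I_j=[\tau_j-\varepsilon,\tau_j+\varepsilon]$ on which $|\mathcal{A}'|\ge c>0$ and $\mathcal{A}$ changes sign. Uniform convergence of $\mathcal{A}_m'$ gives $|\mathcal{A}_m'|\ge c/2$ on $I_j$ for large $m$, so $\mathcal{A}_m$ is strictly monotone there. Uniform convergence of $\mathcal{A}_m$ gives the same signs at the endpoints of $I_j$. Hence $\mathcal{A}_m$ has exactly one zero in each $I_j$.

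Fix $\delta>0$ small and set $K=[0,\pi-\delta]\setminus\bigcup_j \operatorname{int} I_j$. Then $\min_K|\mathcal{A}|>0$, since $\mathcal{A}(0)=1$ and $\mathcal{A}$ has no other zeros there. By uniform convergence, $\mathcal{A}_m$ has no zeros on $K$ for large $m$. So on $(0,\pi-\delta]$ the functions $\mathcal{A}_m$ and $\mathcal{A}$ have exactly the same number of zeros.

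\emph{Step 2 (the window $(\pi-\delta,\pi)$; the main obstacle).} Uniform convergence alone is not enough here. The zero of $\mathcal{A}$ at $\pi$ has order $2n$, while that of $\mathcal{A}_m$ has order $2n+2$. Moreover, the new high-frequency term $d_{n+1}\cos\lambda_{n+1}t$ is only $O(m^{-2n})$ in size, which is exactly the size of $\mathcal{A}$ at distance $\sim 1/m$ from $\pi$. We must show that $\mathcal{A}_m$ does not vanish on $(\pi-\delta,\pi)$.

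Write $s=\pi-t$. Since $\lambda_k\equiv k\pmod 2$ in the reduced setting, the computation of Lemma~\ref{Lemma 1.4} gives
\[
\mathcal{A}_m(\pi-s)=\frac{(-1)^{n+1}}{\mathcal{N}_m}\, f_s[\lambda_0^2,\dots,\lambda_{n+1}^2],\qquad f_s(x)=\cos(\sqrt{x}\,s).
\]
Here $\mathcal{N}_m$ is the normalising constant for $\sigma(J_m)$, and a check is needed that the parity of $\lambda_{n+1}$ is consistent with $(-1)^{n+1}$. By the Hermite--Genocchi formula, the right side is an average of $f_s^{(n+1)}$ over the simplex spanned by the nodes. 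Using $f_s(x)=\sum_j(-1)^j s^{2j}x^j/(2j)!$, we can write $f_s^{(n+1)}(x)=s^{2n+2}g_{n+1}(xs^2)$ for an explicit entire function $g_{n+1}$, essentially a Bessel-type function $y^{-(n+1)/2}J_{n+1}(\sqrt y)$ up to sign and constants. This has fixed sign for $y$ small. That gives positivity of $\mathcal{A}_m$ for $0<s\le c_0/\lambda_{n+1}$, i.e.\ $s\lesssim 1/m$.

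For the intermediate range $c_0/\lambda_{n+1}\le s\le\delta$, the first attempt would be the recursive divided-difference identity
\[
f_s[\lambda_0^2,\dots,\lambda_{n+1}^2]=\frac{f_s[\lambda_1^2,\dots,\lambda_{n+1}^2]-f_s[\lambda_0^2,\dots,\lambda_n^2]}{\lambda_{n+1}^2-\lambda_0^2}.
\]
This expresses $\mathcal{A}_m(\pi-s)$ as a positive multiple of $\mathcal{A}(\pi-s)$, which is $\asymp s^{2n}$ by Lemma~\ref{Lemma 1.4}, plus a remainder involving $\lambda_{n+1}$. One would then bound that remainder by $o(s^{2n})$ uniformly in this range. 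The remainder oscillates at frequency $\lambda_{n+1}$ with amplitude $O(\lambda_{n+1}^{-2})$ relative to the lower nodes, which is why the estimate is plausible. I expect matching the constants precisely here to be the crux of the proof.

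\emph{Step 3 (conclusion).} Take $M$ larger than all thresholds from Steps~1 and~2. For $m\ge M$, $\mathcal{A}_m$ and $\mathcal{A}$ then have the same number of zeros in $(0,\pi)$. It remains to check that $\pi$ is still the first PST time of $J_m$. This holds because $\sigma(J_m)$ is again reduced with alternating parity, since $\lambda_{n+1}-\lambda_n=2m+1$ is odd, so Kay's condition~\eqref{eq:spectrumPST} gives $T_0=\pi$. Therefore the number of ESE occurrences of $J_m$ and $J$ coincide.
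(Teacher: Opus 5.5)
\textbf{Gap: your Step 2 is unfinished, and the method you sketch for the window near $t=\pi$ cannot work as stated.}

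Your Step 1 is essentially the whole of the paper's proof. The paper argues only from uniform convergence and simplicity of the zeros, and never examines the neighbourhood of $t=\pi$. You are right that this leaves a hole: $\sup_t|\mathcal{A}_m-\mathcal{A}|=O(m^{-2})$ while $\mathcal{A}(\pi-s)\asymp s^{2n}$, so for $s\lesssim m^{-1/n}$ uniform convergence says nothing about the sign of $\mathcal{A}_m$.

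Here is why the intermediate-range plan fails. Put $s=\theta/\lambda_{n+1}$ and $\phi(y)=\cos\sqrt{y}$, which is entire, so that $f_s[x_0,\dots,x_k]=s^{2k}\phi[s^2x_0,\dots,s^2x_k]$. Let $S_j(\theta)=\sum_{i=0}^{j}(-1)^i\theta^{2i}/(2i)!$. For bounded $\theta$ the nodes $\lambda_k^2s^2$ with $k\le n$ tend to $0$. In your splitting
\[
\mathcal{A}_m(\pi-s)=\frac{\mathcal N}{\mathcal N_m\lambda_{n+1}^2}\,\mathcal A(\pi-s)+\frac{(-1)^{n+1}}{\mathcal N_m\lambda_{n+1}^2}\,f_s[\lambda_1^2,\dots,\lambda_{n+1}^2],
\]
the ratio of the second term to the first therefore tends to $-\rho(\theta)$, where $\rho(\theta)=(-1)^n(2n)!\,(\cos\theta-S_{n-1}(\theta))/\theta^{2n}$. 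This is of order one, and $\rho(\theta)\to1$ as $\theta\to0$.

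So when $s\lambda_{n+1}=O(1)$ the remainder is not $o(s^{2n})$. It is the same size as the main term and nearly cancels it, and it becomes negligible only when $s\lambda_{n+1}\to\infty$. Your pointwise-sign Hermite--Genocchi argument reaches only $\theta$ below the first positive zero of $J_{n+1/2}$ (the Bessel index is $n+\tfrac12$, not $n+1$). A window $c_0\le s\lambda_{n+1}\le C$ is therefore covered by neither argument. Closing it needs more than matching constants: you need the exact limiting profile in $\theta$ and a sign fact about it.

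\textbf{The missing input is that sign fact.} Divided differences of the entire function $\phi$ depend continuously on the nodes. Hence, uniformly for $\theta\in[0,C]$,
\[
\frac{\mathcal N_m\,\mathcal A_m(\pi-s)}{s^{2n+2}}=(-1)^{n+1}\phi[\lambda_0^2s^2,\dots,\lambda_n^2s^2,\theta^2]\longrightarrow\frac{R_n(\theta)}{\theta^{2n+2}},\qquad R_n(\theta)=(-1)^{n+1}\bigl(\cos\theta-S_n(\theta)\bigr).
\]
Now $R_n(\theta)>0$ for every $\theta\neq0$. This is the alternation of the Maclaurin partial sums of $\cos$, proved by induction from $R_n''=R_{n-1}$, $R_n(0)=R_n'(0)=0$ and $R_0=1-\cos\theta$. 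The limit extends continuously to $\theta=0$ with value $1/(2n+2)!$, so it is bounded below on $[0,C]$. Consequently $\mathcal A_m(\pi-s)>0$ for $0<s\le C/\lambda_{n+1}$ once $m$ is large, for any fixed $C$. This also subsumes your small-$s$ step.

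For $C/\lambda_{n+1}\le s\le\delta$ a crude bound does suffice. Apply the Leibniz rule for divided differences to $f_s(x)\,(x-\lambda_{n+1}^2)^{-1}$. This gives $\mathcal N_m\mathcal A_m(\pi-s)=\mathcal N\mathcal A(\pi-s)/(\lambda_{n+1}^2-\lambda_n^2)+E$ with $|E|\lesssim\lambda_{n+1}^{-2}s^{2n}\sum_{j=1}^{n}(s\lambda_{n+1})^{-2j}$. Since $\mathcal A(\pi-s)\ge c\,s^{2n}$ on $(0,\delta]$, $E$ is smaller than the main term once $C$ is large. Choose $C$ first, then $M$.

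The even case goes the same way, using $\sin(\sqrt{x}\,s)/\sqrt{x}$ and the alternating Taylor bounds for $\sin$.
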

\begin{proof}
Without loss of generality, assume that $\sigma(J)$ is reduced. From Corollaries~\ref{cor:odd case number of zeros} and \ref{cor:even case number of zeros} we know that $\mathcal{A}$ has finitely many zeros in $(0, \pi)$. Since $\{\mathcal{A}_m\}_{m = 1}^\infty$ uniformly converges to $\mathcal{A}$, $\mathcal{A}_m $ has at least as many zeros as $\mathcal{A}$ on $(0, \pi)$. Moreover, as the zeros of $\mathcal{A}$ are simple, then there is a positive integer $M$ such that for all $m > \max(\lambda_n, M)$, the functions $\mathcal{A}_m$ and $\mathcal{A}$ have the same number of zeros in $(0, \pi)$ as the derivatives would be non-zero around zeros of $\mathcal{A}$.
\end{proof}

Figures~\ref{fig: not true}, \ref{fig:1 14-16 19}, and~\ref{fig:1 14-17} illustrate the effect of adding eigenvalues to a system without ESE based on their distance from existing clusters. In this case, the baseline cluster would be $\{\pm 14, \pm15, \pm16\}$. While appending $\pm 19$ does not result in ESE, introducing $\pm 17$ as the next pair of consecutive integers successfully yields the phenomenon.



\par It is known that a Jacobi matrix with a reduced symmetric spectrum $\{0, \pm \lambda_1, \pm \lambda_2\}$ has ESE if and only if $\lambda_1 \neq 1$~\cite{EM25}. This is not true in the general case (see Figures~\ref{fig: not true}-\ref{fig:1 14-17}). However, recursively applying Theorem~\ref{thm: same number of zeros } to such matrices generates infinitely many infinite families of Jacobi matrices with PST having ESE and no ESE.
\begin{figure}[H]
    \centering
    \begin{minipage}{0.45\textwidth}
    \centering
    \includegraphics[height=1.4in]{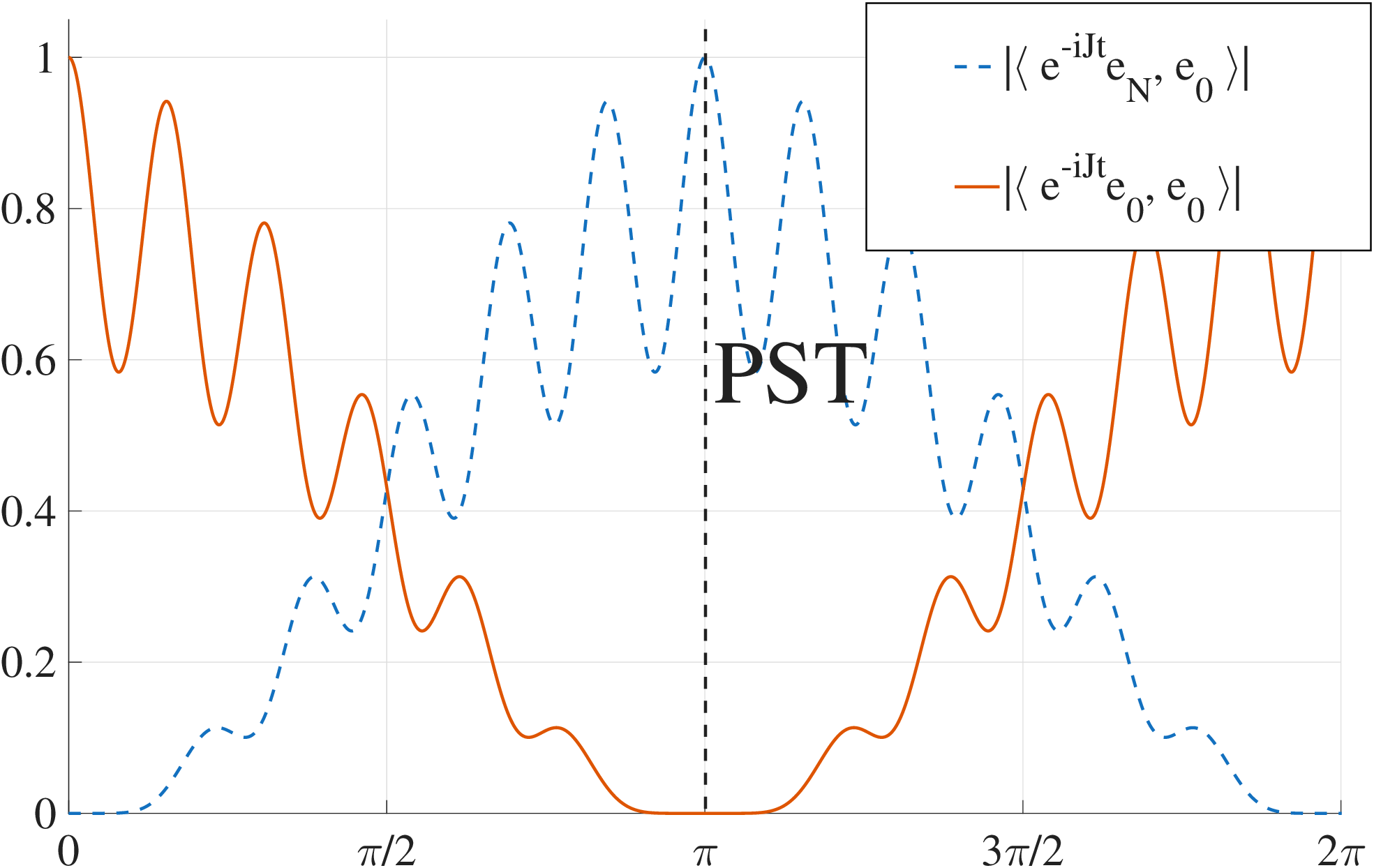}
    \textit{\small $J$ with the reduced spectrum $\{0,\pm1,\pm12,\pm13,\pm14\}$.}
    \end{minipage}\hfill
    \begin{minipage}{0.45\textwidth}
    \centering
    \includegraphics[height=1.4in]{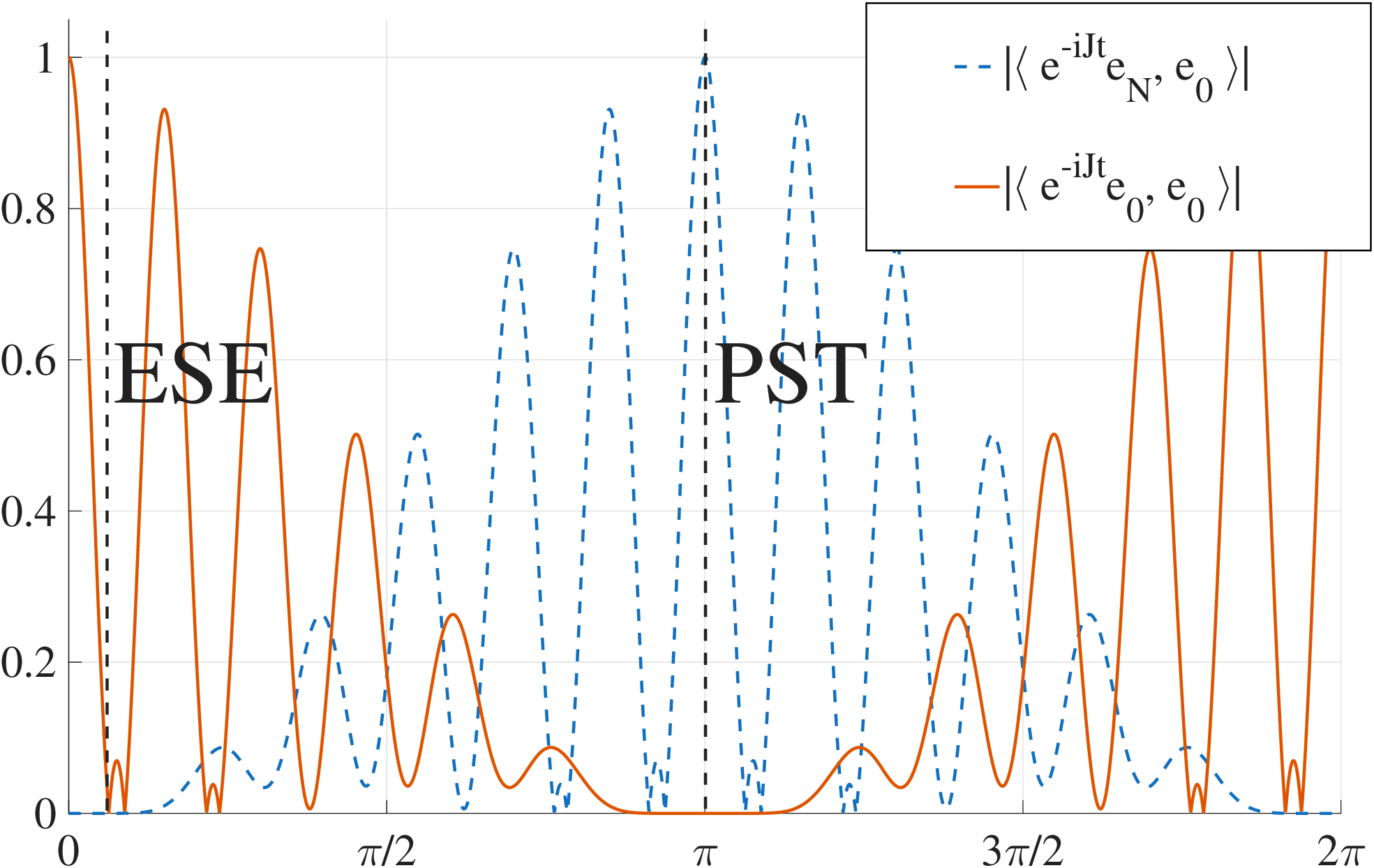}
    \textit{\small$J$ with the reduced spectrum $\{0,\pm1,\pm12,\pm13,\pm14,\pm15\}$.}
    \end{minipage}\hfill
    \begin{minipage}{0.45\textwidth}
    \centering
    \includegraphics[height=1.4in]{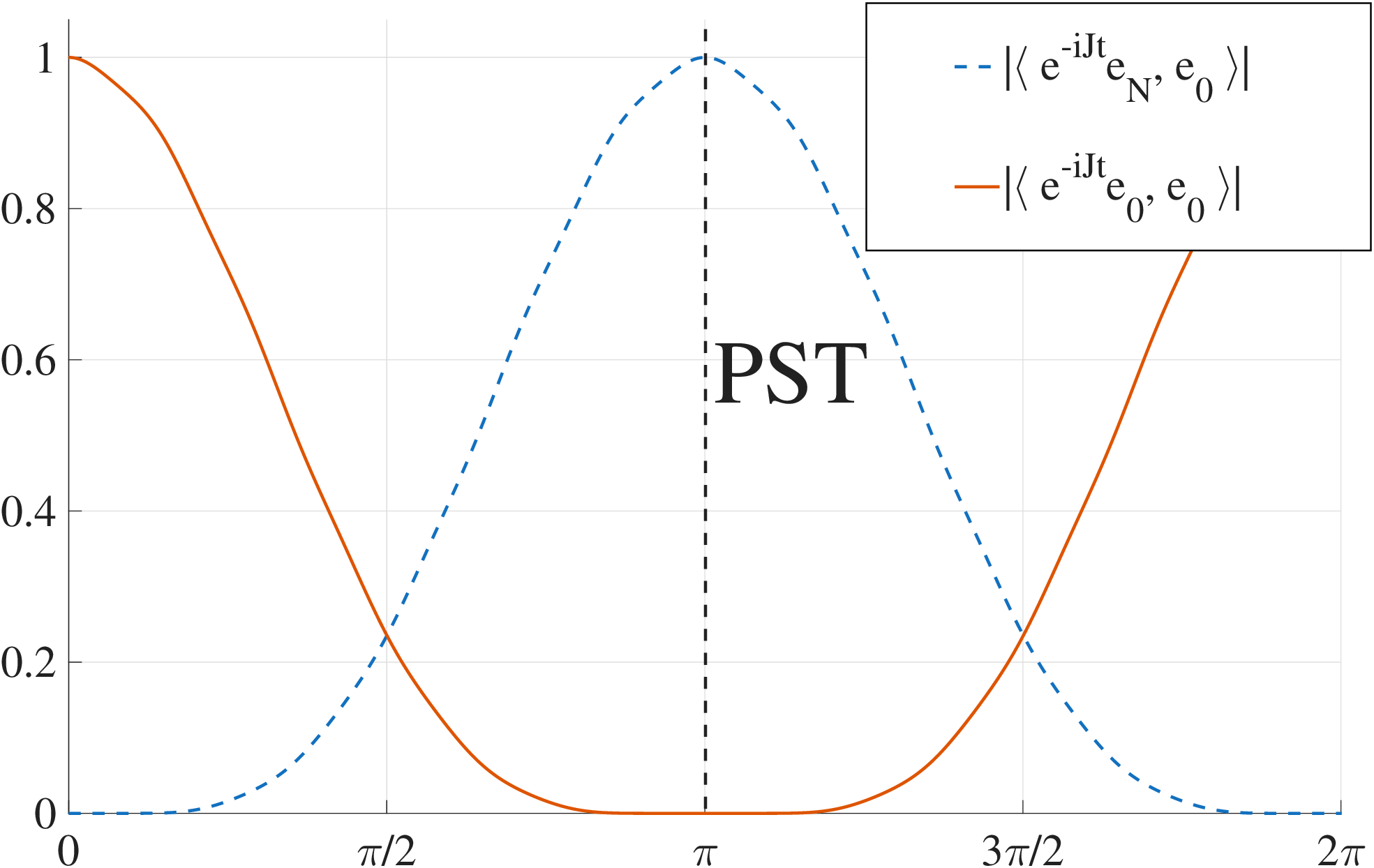}
    \textit{$J$ with the reduced spectrum $\{0,\pm1,\pm2,\pm13,\pm14,\pm15\}$.}
    \end{minipage}
    \hfill
    \begin{minipage}{0.45\textwidth}
    \centering
    \includegraphics[height=1.4in]{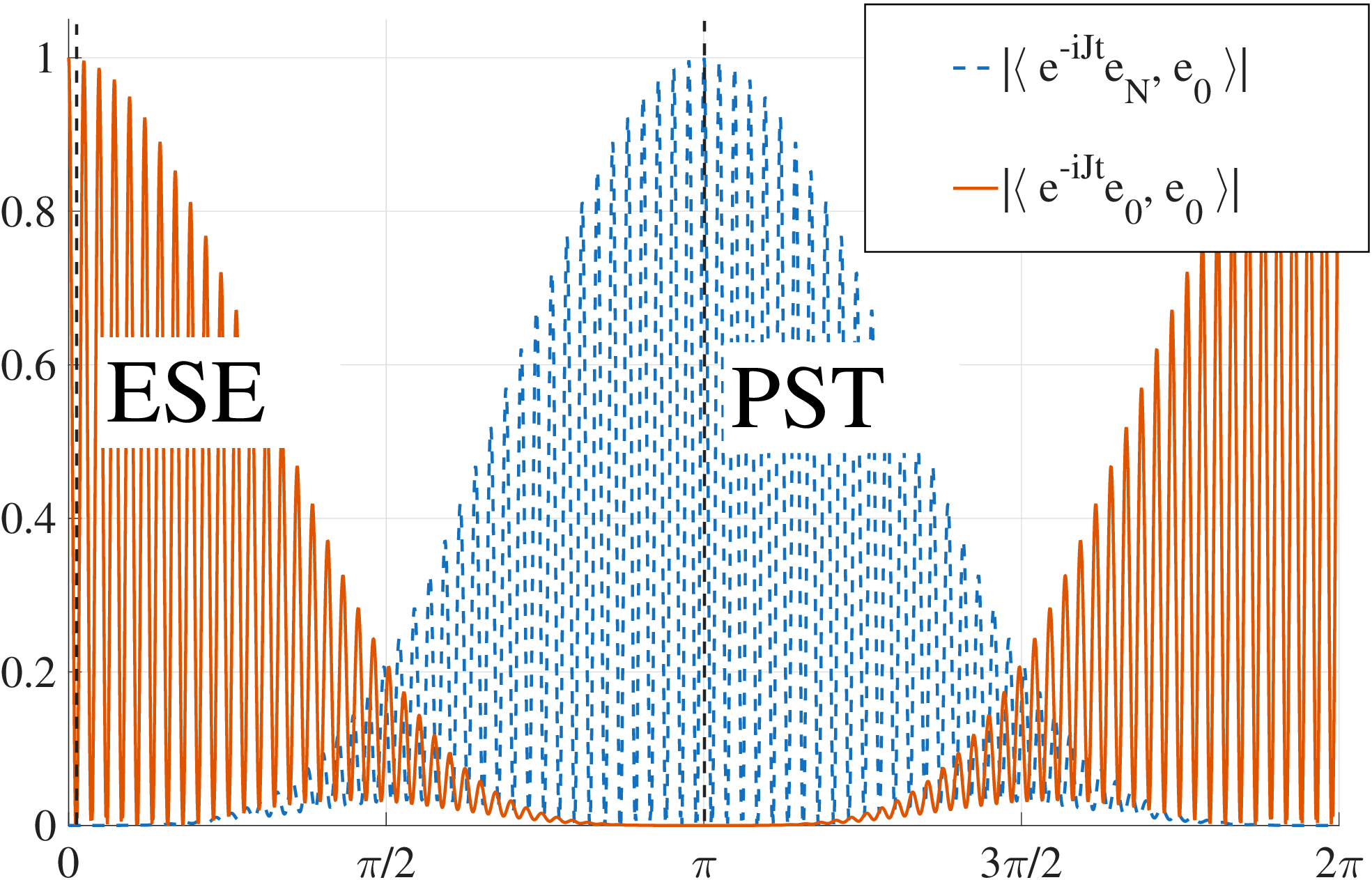}
    \textit{$J$ with the reduced spectrum $\{0,\pm1,\pm2,\pm81,\pm82,\pm83,\pm84,\pm85,\pm86\}$.} 
    \end{minipage} \\
    \begin{minipage}{0.45\textwidth}
    \centering
    \includegraphics[height=1.4in]{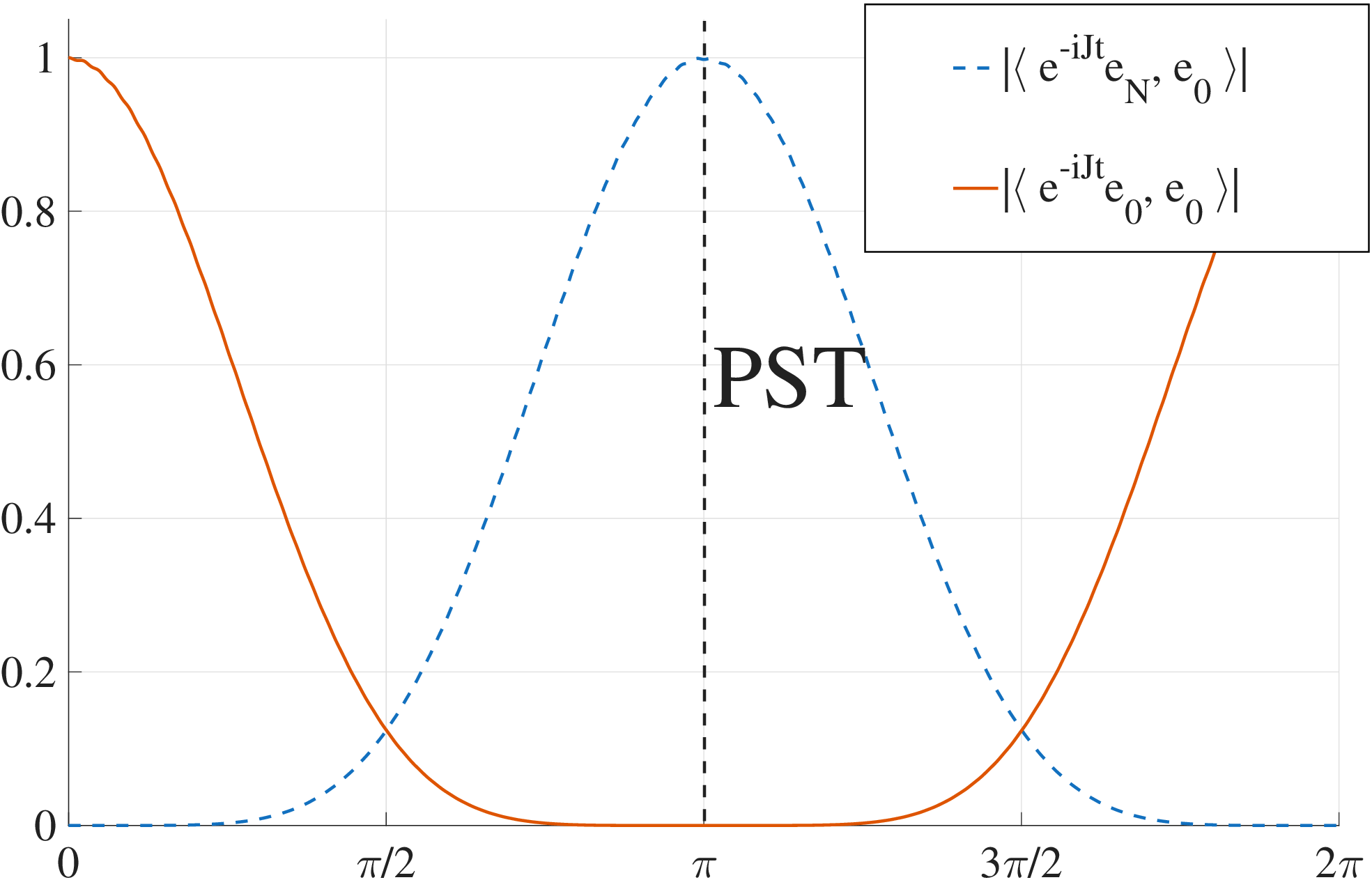}
    \textit{$J$ with the reduced spectrum $\{0,\pm1,\pm2,\pm3,\pm82,\pm83,\pm84,\pm85,\pm86\}$.}
    \end{minipage}
\caption{Examples of having clusters of eigenvalues near $0$ versus clusters further away from $0$.}
\end{figure}

\subsection{The T-Rex Example}
We now discuss the T-Rex chains of arbitrary length $N+1$ (see Figure~\ref{fig:trex}).
\begin{figure}[ht]
    \centering
\begin{tikzpicture}[
    line/.style={thick, black},
    dot/.style={circle, fill=gray!70, inner sep=1.8pt},
    bigcircle/.style={circle, draw=gray!90, thick, minimum size=0.7cm},
    >=stealth,scale=0.7
]

\draw[line] (0,-4.2) -- (0,4.2);

\foreach \y in {4,2,-2,-4} {
    \node[dot] at (0,\y) {};
}

\foreach \y in {-0.6,-0.3,0,0.3,0.6} {
    \node[dot] at (0,\y) {};
}

\node[align=left] at (-2.7,0) {
    $R$ equidistant\\
    eigenvalues close to $0$
};

\node[align=left] at (0.5,0) {\huge
    $\big\}$
};

\draw[line,<->] (0.7,0) -- (2.2,0);

\foreach \y in {0,1,2,3,4} {
    \node[bigcircle] at (3.2,1.9*\y-3.8) {\small$\y$};
}

\node[anchor=west] at (4.2,0) {
    $R$ qubit Krawtchouk chain
};

\end{tikzpicture}
    \caption{A T-Rex chain of length $N+1$, with a core of $R$ eigenvalues and the rest being large and with large gaps, exhibits no ESE, as does the length-$R$ Krawtchouk chain.}
    \label{fig:trex}
\end{figure}
This construction was originally introduced in~\cite{KKT} to model arrival/departure characteristics. We select a symmetric set of $R$ uniform space eigenvalues with a unit gap (e.g., $\{0,\pm1,\pm2, \dots\}$), where $R$ and $N+1$ share the same parity. The remaining $N-R$ eigenvalues are then chosen symmetrically about the origin to satisfy perfect state transfer conditions at a significantly higher scale $O(m)$, with a mutual separation of $O(m)$. 
\begin{figure}[ht]
\caption{Krawtchouk vs. T-Rex chains}
    \centering
    \begin{minipage}{0.45\textwidth}
    \includegraphics[height=1.4in]{sn-article-figures/_1,-2,-3,0,1,2,3.eps}
    \textit{$J$ with the reduced spectrum $\{0,\pm1,\pm2,\pm3,\pm4\}$.}
    \end{minipage}\hfill
    \begin{minipage}{0.45\textwidth}
    \centering
    \includegraphics[height=1.4in]{sn-article-figures/_1,-2,-3,0,1,2,3.eps}
    \textit{$J$ with the reduced spectrum $\{0,\pm1,\pm2,\pm3,\pm20,\pm25\}$.}
    \end{minipage}
\end{figure}

In~\cite{KKT}, the authors showed that such chains behave essentially like the Krawtchouk chain of length $R$; i.e., with symmetric equidistant spectrum. It was shown in~\cite{ESE} that such Krawtchouk chains do not exhibit ESE. Combining this with our main result, this translates into the following fact.

\begin{corollary}
    For sufficiently large $R$ and $m$, the corresponding T-Rex chain does not have ESE.
\end{corollary}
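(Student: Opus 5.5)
We will argue by reducing the T-Rex chain to the Krawtchouk chain through the uniform convergence of amplitudes established in the Lemma of Section~\ref{Section 4}, and then applying the zero-counting argument behind Theorem~\ref{thm: same number of zeros }. Fix $R$ and let $J^{(R)}$ be the persymmetric Jacobi matrix with the reduced symmetric equidistant spectrum of size $R$, i.e.\ $\{0,\pm1,\dots,\pm n\}$ when $R=2n+1$ (the even case uses the half-integer equidistant spectrum). Let $\mathcal{A}$ denote its amplitude. By~\cite{ESE}, $J^{(R)}$ has no ESE, so $\mathcal{A}$ has no zeros on $(0,\pi)$; in the odd case $\mathcal{A}(t)=\cos^{2n}(t/2)$ up to normalization, which is strictly positive on $[0,\pi)$. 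The T-Rex chain is obtained from $J^{(R)}$ by repeated spectral fusion: we adjoin pairs $\pm\lambda$ with $\lambda$ of order $m$ and mutual gaps of order $m$, chosen with the parity constraints so the spectrum remains reduced and PST at $T_0=\pi$ persists by~\eqref{eq:spectrumPST}.

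First we extend the Lemma to the adjunction of finitely many, say $p=(N+1-R)/2$, pairs simultaneously. The coefficients change exactly as in that proof. Each new coefficient $d_{n+j}$ carries a factor $\prod_i|\lambda_{n+j}^2-\lambda_i^2|^{-1}$, which is $O(m^{-2(n+p)})$ because all of its differences are of order $m^2$. Each old coefficient satisfies $d_k-c_k=O(m^{-2})$ (respectively $O(m^{-1})$ in the even case), since the extra factors $(\lambda_{n+j}^2-\lambda_k^2)^{-1}$ are essentially independent of $k$ up to relative error $O(m^{-2})$ and cancel against the normalization. Hence $\mathcal{A}_m\to\mathcal{A}$ uniformly on $\mathbb{R}$, where $\mathcal{A}_m$ is the amplitude of the T-Rex chain. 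The Remark gives the same for derivatives.

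Next we rule out zeros. On any compact $[0,\pi-\delta]$ we have $\mathcal{A}\ge c_\delta>0$, so uniform convergence gives $\mathcal{A}_m>0$ there for large $m$. The \emph{main obstacle} is the neighbourhood of $t=\pi$. There $\mathcal{A}$ vanishes to order $2n$, so uniform convergence alone cannot exclude spurious zeros of $\mathcal{A}_m$; indeed Figures~\ref{fig:1 14-16 19} and~\ref{fig:1 14-17} show that such zeros can appear. We will use Lemma~\ref{Lemma 1.4} for $\mathcal{A}_m$, which says that $\mathcal{A}_m$ has a zero of order $2(n+p)$ at $\pi$. Writing $\mathcal{A}_m(t)=P_m(\cos t)$ with $P_m(x)=(x+1)^{n+p}Q_m(x)$, as in Corollary~\ref{cor:odd case number of zeros}, the problem reduces to showing $Q_m>0$ on $(-1,1)$.

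For this step we plan a Taylor comparison on $[\pi-\delta,\pi]$. The even derivatives $\mathcal{A}_m^{(2j)}(\pi)$, computed by the divided-difference identity from the proof of Lemma~\ref{Lemma 1.4}, vanish for $j<n+p$. This means the high frequencies $\pm\lambda_{n+j}\sim m$ exactly cancel the lower-order Taylor terms of the core. We therefore expect $\mathcal{A}_m$ near $\pi$ to be close to $\mathcal{A}$ plus a correction of size $O(m^{-2})$ oscillating at frequency $m$. We would try to show that on the window $|t-\pi|\lesssim m^{-1/(2n)}$ the sign is controlled by the dominant term $\mathcal{A}_m^{(2n+2p)}(\pi)(t-\pi)^{2n+2p}/(2n+2p)!$, which is positive. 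Outside that window the core term $\cos^{2n}(t/2)$ dominates the $O(m^{-2})$ perturbation. Matching these two regimes is where the hypothesis ``$R$ sufficiently large'' should enter: a large $R$ makes the core bump flat enough that the windows overlap. If this direct estimate fails, the fallback is to invoke the asymptotic equivalence with the Krawtchouk chain proved in~\cite{KKT} near $t=\pi$.
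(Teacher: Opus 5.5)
Your reduction to the Krawtchouk core and your step on $[0,\pi-\delta]$ are fine, and you are right that uniform convergence says nothing near $t=\pi$, where $\mathcal{A}$ vanishes to order $2n$. But that is the step the corollary actually needs, and you leave it as a plan that would not work as described. Put $s=\pi-t$.

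\begin{itemize}
\item The leading Taylor term at $\pi$ controls the sign of $\mathcal{A}_m$ only for $s\ll 1/m$. By the divided-difference computation in the proof of Lemma~\ref{Lemma 1.4}, $\mathcal{A}_m^{(2j)}(\pi)=\pm h_{j-n-p}(\lambda_0^2,\ldots,\lambda_{n+p}^2)/\mathcal{N}_m$. Here $h_b$ is the complete homogeneous symmetric polynomial and $\mathcal{N}_m$ the normalizing sum. This has size $m^{2(j-n-p)}/\mathcal{N}_m$, so consecutive Taylor terms have ratio of order $(ms)^2$.
\item Comparing $\mathcal{A}(\pi-s)=\sin^{2n}(s/2)$ with a sup-norm error $O(m^{-2})$ only works for $s\gg m^{-1/n}$.
\end{itemize}

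For $n\ge 2$ the band $m^{-1}\lesssim s\lesssim m^{-1/n}$ is covered by neither estimate. Enlarging $R$ (hence $n$) widens this band and flattens the core, so the hope that large $R$ makes the windows overlap runs the wrong way. The fallback to~\cite{KKT} is not a sign estimate on $(\pi-\delta,\pi)$ either.

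For comparison, the paper's proof is only a citation chain: Krawtchouk chains have no ESE~\cite{ESE}, and the fusion theorem of Section~\ref{Section 4} is applied pair by pair, its simple-zero hypothesis being vacuous here. That theorem's proof also rests on uniform convergence alone and never treats the zero of order $2n$ at $\pi$. So citing it would reproduce the paper, but it would not fill the hole you have correctly located.

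The missing ingredient is a description of $\mathcal{A}_m$ near $\pi$ that tracks how the error vanishes there, rather than its sup norm. For one fused pair (the paper's operation), take the same divided-difference identity, expand in $s$, and regroup by powers of $\lambda_{n+1}^{-2}$. This gives exactly
\[
\mathcal{A}_m(\pi-s)=\frac{1}{\mathcal{N}_m\lambda_{n+1}^{2n+2}}\sum_{b\ge 0}(-1)^b\,\frac{h_b(\lambda_0^2,\ldots,\lambda_n^2)}{\lambda_{n+1}^{2b}}\,\rho_{n+b}(\lambda_{n+1}s),
\qquad
\rho_k(u)=(-1)^{k+1}\Bigl(\cos u-\sum_{r=0}^{k}\frac{(-1)^r u^{2r}}{(2r)!}\Bigr).
\]

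\begin{itemize}
\item Since $\rho_0=1-\cos u\ge 0$, $\rho_k''=\rho_{k-1}$ and $\rho_k(0)=\rho_k'(0)=0$, each $\rho_k$ with $k\ge1$ is positive and nondecreasing on $(0,\infty)$.
\item Consequently $\rho_{n+b}(u)\le \frac{u^{2b}}{(2b)!}\rho_n(u)$.
\item Together with $h_b\le\binom{n+b}{b}\lambda_n^{2b}$, the $b\ge1$ part of the sum is bounded in absolute value by $\rho_n(\lambda_{n+1}s)\sum_{b\ge1}\binom{n+b}{b}\frac{(\lambda_n s)^{2b}}{(2b)!}$.
\end{itemize}

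Hence $\mathcal{A}_m>0$ for $0<s\le\delta$, with $\delta$ depending only on $n$ and $\lambda_n$ and not on $m$. Uniform convergence then handles $[0,\pi-\delta]$. Iterating, with each new pair placed far enough beyond the previous chain, gives no-ESE T-Rex chains.

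Your simultaneous version with $p\ge 2$ pairs at a common scale $m$ needs more. In the window $s\asymp 1/m$, the same computation shows the sign is governed by a $(p-1)$-st divided difference, over the large squared eigenvalues, of $y\mapsto y^{-n-1}\rho_n(s\sqrt{y})$, and that sign is not automatic.

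A minor slip: after normalization the new coefficients are $O(m^{-2n})$, not $O(m^{-2(n+p)})$, because $\mathcal{N}_m$ itself decays like $m^{-2p}$. This does not affect the uniform convergence.
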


\section*{Acknowledgements}

This work was supported by the SIAM-Simons Undergraduate Summer Research Program which is funded by the Society for Industrial and Applied Mathematics (SIAM) through award 1036702 of the Simons Foundation. A.M. was also partially supported by the Coffin Grant from the University of Hartford. 

\appendix 
\section{Equidistant Spectra}
Using the amplitude formulas from Proposition~\ref{Prop: amplitudes}, we present an alternative proof for Theorem 3.1 in~\cite{ESE}:

\begin{proposition}{\cite[Theorem 3.1]{ESE}}
Let $J$ be a persymmetric Jacobi matrix of order $N + 1$. Suppose that $J$ has an equidistant spectrum centered about $\mu \in \R$ separated by $\lambda > 0$. Then
\[
\langle e^{-iJt}\e_0, \e_0\rangle_{\mathbb{C}^{N+1}} = e^{-i\mu t} \cos^N\frac{\lambda t}{2}.
\]
Consequently, $J$ does not have ESE.
\end{proposition}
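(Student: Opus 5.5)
We first reduce to $\mu = 0$. By the remark following the convention on symmetric spectra, a spectrum centered at $\mu$ only multiplies the amplitude by $e^{-i\mu t}$. So it suffices to treat the spectrum $\{\lambda(k - N/2)\}_{k=0}^{N}$, which is symmetric about $0$. Rescaling time by $\lambda$, we may take eigenvalues $k - N/2$ for $k=0,\dots,N$; these are integers if $N$ is even and half-integers if $N$ is odd. We then apply Proposition~\ref{Prop: amplitudes} and aim to show that
\[
\mathcal{A}(t)=\cos^N(t/2)=2^{-N}\sum_{k=0}^N\binom{N}{k}e^{i(k-N/2)t}.
\]
Comparing coefficients, the claim is that the weight attached to the eigenvalue $k-N/2$ equals $2^{-N}\binom{N}{k}$.

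To verify this we use the form of the weights from the proof of Proposition~\ref{Prop: amplitudes}, before the symmetric pairing: the coefficient of $e^{-i\mu_k t}$ is
\[
\frac{1}{\|\mathbf u_k\|^2}=\frac{1/|p_{N+1}'(\mu_k)|}{\sum_j 1/|p_{N+1}'(\mu_j)|}.
\]
This follows from Theorem~3 in \cite{H74} and Lemma~\ref{Lemma b's}, and persymmetry is what allows us to invoke them. For $\mu_k=k-N/2$ and $p_{N+1}(x)=\prod_j(x-\mu_j)$ we compute
\[
|p_{N+1}'(\mu_k)|=\prod_{j\ne k}|k-j|=k!\,(N-k)!.
\]
Hence the unnormalized weight is $1/(k!(N-k)!)=\binom{N}{k}/N!$. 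Normalizing by the sum, which is $2^N/N!$, gives exactly $2^{-N}\binom{N}{k}$. Pairing the terms for $\pm\mu_k$ reproduces the cosine form $c_k$ or $\widetilde c_k$, so $\mathcal A(t)=\cos^N(t/2)$. Undoing the time rescaling gives $\cos^N(\lambda t/2)$, and undoing the shift gives the factor $e^{-i\mu t}$.

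For the consequence, note that consecutive gaps equal $\lambda$, so by \eqref{eq:spectrumPST} the earliest PST time is $T_0=\pi/\lambda$; indeed all $n_k=0$, and any smaller $T$ would violate the relation. The amplitude $e^{-i\mu t}\cos^N(\lambda t/2)$ has modulus $|\cos(\lambda t/2)|^N$. This is nonzero for $0<t<\pi/\lambda$ because $0<\lambda t/2<\pi/2$ there, so there is no $\tau\in(0,T_0)$ at which the overlap vanishes, and $J$ has no ESE.

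The main obstacle is bookkeeping: checking that the normalized weights in Proposition~\ref{Prop: amplitudes}, written in terms of $|\lambda_k^2-\lambda_i^2|$, agree with $\binom{N}{k}2^{-N}$ in both parities. Working with the unpaired form $1/|p'_{N+1}(\mu_k)|$ avoids splitting into the odd and even cases, so I would present the computation that way.
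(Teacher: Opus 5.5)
Your proposal is correct and follows the paper's proof: you identify the spectral weights as $2^{-N}\binom{N}{k}$, sum them via the binomial theorem to get $e^{-i\mu t}\cos^N(\lambda t/2)$, and note that this has no zeros before $T_0=\pi/\lambda$. The differences are cosmetic. You work with the unpaired weights $1/|p_{N+1}'(\mu_k)| = 1/(k!\,(N-k)!)$, which avoids the paper's split into odd and even $N+1$. You also derive the binomial weights, whereas the paper simply states the values of $c_k$ and $\widetilde c_k$.
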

\begin{proof}
As usual, we consider two cases. Assume that $N + 1 = 2n + 1$ is odd. Then the coefficients in \eqref{eq:c_k} are given by
\[
c_0 = \frac{1}{2^{2n}} \binom{2n}{n} \quad \text{and} \quad c_k = \frac{1}{2^{2n - 1}} \binom{2n}{n - k} \quad \text{for every $1 \leq k \leq n$.}
\]
Hence
\begin{align*}
\langle e^{-iJt}\e_0, \e_0\rangle_{\mathbb{C}^{N+1}} &= \frac{e^{-i\mu t}}{2^{2n}} \binom{2n}{n} + \frac{e^{-i\mu t}}{2^{2n - 1}} \sum_{k = 1}^n \binom{2n}{n - k} \cos{k\lambda t} \\
    &= \frac{e^{-i\mu t}}{2^{2n}} \sum_{k = 0}^{2n} \binom{2n}{k} e^{i\lambda(n - k)t} \\
    &= e^{-i\mu t} \cos^{2n}\frac{\lambda t}{2}.
\end{align*}
Now assume that $N + 1 = 2n$ is even. Then the coefficients in \eqref{eq:widetilde{c}_k} are given by 
\[
\widetilde{c}_k = \frac{1}{2^{2n - 2}} \binom{2n - 1}{n + k - 1} \quad \text{for every $1 \leq k \leq n$}
\]
and hence
\begin{align*}
\langle e^{-iJt}\e_0, \e_0\rangle_{\mathbb{C}^{N+1}} &= \frac{e^{-i\mu t}}{2^{2n - 2}}\sum_{k = 1}^n  \binom{2n - 1}{n + k - 1}\cos\frac{(2k - 1)\lambda t}{2} \\
    &= \frac{e^{-i\mu t}}{2^{2n - 1}} \sum_{k = 0}^{2n - 1} \binom{2n - 1}{k}e^{i\lambda (2n + 1 - 2k)t/2} \\
    &= e^{-i\mu t}\cos^{2n - 1}\frac{\lambda t}{2}.
\end{align*}
As the first instance of PST occurs at $T_0 = \pi/\lambda$ for both cases and $\mathcal{A}$ has no roots in $(0, \pi/\lambda)$, then $J$ cannot have ESE. 
\end{proof}

\bibliography{Finaldraft7x7.bib}

\end{document}